\def\rfl1{\rho_S^{RIP}(\delta)}
\def\rflqc1{\rho_S^{RIP}(\delta;q,C_1(\delta,\rho)\le\Upsilon)}
\def\rfl{\rho_S^{FL}(\delta)}
\def\sl1{StrongEquiv(A,\ell^1)}
\def\Lbt{{\cal L}^{BT}}
\def\Ubt{{\cal U}^{BT}}
\def\Lbct{{\cal L}^{BCT}}
\def\Ubct{{\cal U}^{BCT}}
\def\Lct{{\cal L}^{CT}}
\def\Uct{{\cal U}^{CT}}
\newcommand{\argmin}{\operatornamewithlimits{arg\,min}}
\title{Improved Bounds on Restricted Isometry \\ Constants  for Gaussian Matrices}
\author{Bubacarr Bah\thanks{School of Mathematics and Maxwell Institute, University of Edinburgh, Edinburgh, UK ({\tt b.bah@sms.ed.ac.uk}).}
 \and Jared Tanner\thanks{School of Mathematics and Maxwell Institute, University of Edinburgh, Edinburgh, UK ({\tt jared.tanner@ed.ac.uk}).  JT acknowledges support from the Leverhulme Trust.}}
\date{January 2010}
\begin{document}

\maketitle

\begin{abstract} 
The Restricted Isometry Constants (RIC) of a matrix $A$ measures how close to an isometry is the action of $A$ on vectors with few nonzero entries, measured in the $\ell^2$ norm.  Specifically, the upper and lower RIC of a matrix $A$ of size $n\times N$ is the maximum and the minimum deviation from unity (one) of the largest and smallest, respectively, square of singular values of all ${N\choose k}$ matrices formed by taking $k$ columns from $A$.  Calculation of the RIC is intractable for most matrices due to its combinatorial nature; however, many random matrices typically have bounded RIC in some range of problem sizes $(k,n,N)$.  We provide the best known bound on the RIC for Gaussian matrices, which is also the smallest known bound on the RIC for any large rectangular matrix.   Improvements over prior bounds are achieved by exploiting similarity of singular values for matrices which share a substantial number of columns.  
\end{abstract}

\begin{keywords}
Wishart Matrices, Compressed sensing, sparse approximation, restricted isometry constant, phase transitions, Gaussian matrices, singular values of random matrices.
\end{keywords}

\begin{AMS}
Primary: 15B52, 60F10, 94A20. Secondary: 94A12, 90C25.
\end{AMS}

\pagestyle{myheadings}
\thispagestyle{plain}
\markboth{B. BAH AND J. TANNER}{Bounds on Restricted Isometry Constants: Gaussian}


\section{Introduction}\label{sec:Intro} Interest in parsimonious solutions to underdetermined systems of equations has seen a spike with the introduction of compressed sensing \cite{CompressedSensing,CTDecoding, CRTRobust}. Much of the analysis in this new topic has relied upon a new matrix quantity, the Restricted Isometry Constant (RIC), also referred to as the Restricted Isometry Property (RIP) constant. Let $A$ be a matrix of size $n\times N$ and define the set of $N$-vectors with at most $k$ nonzero entries as 
\begin{equation}
\chi^{N}(k):=\{x\in \mathbb{R}^{N}: \|x\|_{\ell^0}\leq k\}.
\end{equation}
Upper and lower RICs of $A$, $U(k,n,N;A)$ and $L(k,n,N;A)$ respectively, are defined as \cite{Cand,BCT}
\begin{equation}
\label{RIPU}
U(k,n,N;A):=\min_{c\geq 0}c \quad \mathrm{subject} ~\mathrm{to} \quad (1+c) \|x\|_{2}^{2} \geq \|Ax\|_{2}^{2} \quad\forall x \in \chi^{N}(k).
\end{equation}
\begin{equation}
\label{RIPL}
L(k,n,N;A):=\min_{c\geq 0}c \quad \mathrm{subject} ~\mathrm{to} \quad (1-c) \|x\|_{2}^{2} \leq \|Ax\|_{2}^{2}, \quad\forall x \in \chi^{N}(k);
\end{equation}

RICs differ from standard singular values squared in their combinatorial nature. $U(k,n,N;A)$ and $L(k,n,N;A)$ measure the maximum and the minimum deviation from unity (one) of the largest and smallest, respectively, square of the singular values of all $N \choose k$ submatrices of $A$ of size $n \times k$ constructed by taking $k$ columns from $A$.  The RICs can be equivalently defined as 
\begin{equation}
U(k,n,N;A) := \max_{K\subset \Omega, |K|=k} \lambda^{max}\left(A_K^* A_K\right) - 1
\end{equation}
and 
\begin{equation}
L(k,n,N;A) := 1 - \min_{K\subset \Omega, |K|=k} \lambda^{min}\left(A_K^* A_K\right)
\end{equation}
where $\Omega: = \{1,2,\dots,N\}$, $A_{K}$ is the restriction of the columns of $A$ to a support set $K \subset \Omega$ with cardinality $k$ ($|K| = k$), and $\lambda^{max}\left(B\right)$ and $\lambda^{min}\left(B\right)$ are the smallest and largest eigenvalues of $B$ respectively. 

The standard notion of General Position is $L(n,n,N;A)<1$, and Kruskal rank \cite{kruskal} is the largest $k$ such that $L(k,n,N;A)<1$.

Many of the theorems in compressed sensing rely upon a``sensing matrix" having suitable bounds on its RIC. Unfortunately, computing the RICs of a matrix $A$ is in general NP-hard, \cite{NPhard}. Efforts are underway to design algorithms which compute accurate bounds on the RICs of a matrix, \cite{daspromont,nemirovski} but to date these algorithms have a limited success, with the bounds only effective for $k \sim n^{1/2}.$ Lacking the ability to efficiently calculate the RICs of a given matrix, efforts are underway to compute probabilistic bounds for various random matrix ensembles. These efforts have followed three research programs:
\begin{itemize}
\item 
Determination of the largest ensemble of matrices such that as the problem sizes $(k,n,N)$ grow, the RICs $U(k,n,N;A)$ remains bounded and the $L(k,n,N;A)$ bounded away from 1 \cite{GAFA}.

\item
Computing as accurate bounds as possible for particular ensembles, such as the Gaussian ensemble \cite{Cand,BCT}, where the entries of $A$ are drawn i.i.d. from the standard Gaussian 
Normal ${\cal N}(0,1/n)$. (In part as a model for i.i.d. mean zero ensembles.)
 
\item
Computing as accurate bounds as possible for the partial Fourier ensemble \cite{Holg}, 
where $A$ is formed from random rows, $j$, or samples, $t_l$, of a Fourier matrix with 
entries $F_{j,l}=e^{2\pi i j t_l}$.  (In part as a model for matrices possessing a fast matrix vector product.) 

\end{itemize}

This manuscript focuses on the second of these research programs, accurate bounds for the Gaussian/Wishart ensemble. Cand\`es and Tao derived the first set of RIC bounds for the Gaussian ensemble using a union bound over all $N \choose k$ submatrices and bounding the singular values of each submatrix using concentration of measure bounds \cite{CTDecoding}. Blanchard, Cartis and Tanner derived the second set of RIC bounds for the Gaussian ensemble, similarly using a union bound over all $N \choose k$ submatrices, but achieved substantial improvements by using more accurate bounds on the probability density function of Wishart matrices \cite{BCT}. These bounds are presented here in Theorem \ref{CTthm} and Theorem \ref{BCTthm} respectively. This manuscript presents yet further improved bounds for the Gaussian ensemble, see Theorem \ref{BTthm} and Figure \ref{UBT_LBT}, by exploiting dependencies in the singular values of submatrices with overlapping support sets, say, $A_K$ and $A_{K'}$ with $|K\cap{K'}| \gg 1$.  These are the first RIC bounds that exploits this structure. In addition to asymptotic bounds for large problem sizes, we present bounds valid for finite values of $(k,n,N)$.

The manuscript is organised as follows: Our improved asymptotic bounds are stated in Section \ref{sec:ImpRIPbound} and their derivation described in Section \ref{sec:boundcostruct}. Prior bounds are presented in Section \ref{sec:PrRIPbound} and are compared with those in Theorem \ref{BTthm}.  Bounds valid for finite values are presented in Section \ref{sec:FiniteN}. A brief discussion on sparse approximation and compressed sensing and the implications of these bounds for compressed sensing  is given in Section \ref{sec:SACS}. Proof of technical lemmas used or assumed in our discussion come in the Appendix. 


\section{RIC Bounds}\label{sec:RIPbound} 
We focus our attention on bounding the RIC for the Gaussian ensemble in the setting of {\em proportional-growth asymptotics}.
\begin{definition}[Proportional-Growth Asymptotics]  
A sequence of problem sizes $(k,n,N)$ is said to follow proportional-growth asymptotics if, 
\begin{equation}
\label{lgasymp}
\frac{k}{n} = \rho_n \rightarrow \rho \quad \mathrm{and} \quad \frac{n}{N} = \delta_n \rightarrow \delta \quad \mathrm{for} \quad (\delta,\rho)\in (0,1)^2 \quad \mathrm{as} \quad (k,n,N) \rightarrow \infty.
\end{equation}
\end{definition}
In this asymptotic we provide quantitative values, above which it is exponentially unlikely 
that the RIC will exceed.  In Section \ref{sec:FiniteN} we show how our derivation of these 
bounds can also supply probabilities for specified bounds and finite values of $(k,n,N)$.

\subsection{Improved RIP Bounds}\label{sec:ImpRIPbound} 
The probability density functions (pdf) of the RIC for the Gaussian ensemble is currently unknown, but asymptotic probabilistic bounds have been proven. Our bounds, and earlier ones, for the RIC of the Gaussian ensemble built upon the bounds of the pdf's of the extreme eigenvalues of Gaussian (Wishart) matrices due to Edelman \cite{EdRao,edelman}.  All earlier bounds on the RIC have been derived using union bounds that consider each of the ${N \choose k}$ submatrices of size $n \times k$ individually \cite{BCT,CTDecoding}.  We consider groups of submatrices where the columns of the submatrices in a group are from at most $m\ge k$ distinct columns of $A$.  We present our improved bounds in Theorem \ref{BTthm}, preceded by the definition of the terms used in it given in Definition \ref{BTdfn}.  Plots of these bounds are displayed in Figure \ref{UBT_LBT}.

\begin{definition}
\label{BTdfn}
Let $(\delta,\rho)\in(0,1)^2$, $\gamma\in [\rho,\delta^{-1}]$, and denote the Shannon Entropy with base $e$ logarithms as $H(p):=p\ln(1/p)+(1-p)\ln(1/(1-p))$.  Let
\begin{align}
\label{psiminnew}
\psi_{min}\left(\lambda,\gamma\right) & := H\left(\gamma\right) + \frac{1}{2} \Big[ \left(1-\gamma \right)\ln\lambda + \gamma \ln\gamma + 1 - \gamma - \lambda \Big], \\
\label{psimaxnew}
\psi_{max}\left(\lambda,\gamma\right) & := \frac{1}{2} \Big[\left(1+\gamma\right)\ln\lambda - \gamma \ln\gamma + 1 + \gamma - \lambda \Big].
\end{align}

\noindent Define $\lambda^{min}(\delta,\rho;\gamma)$ and $\lambda^{max}(\delta,\rho;\gamma)$ as the solution to (\ref{lminnew}) and (\ref{lmaxnew}) respectively:
\begin{eqnarray}
\label{lminnew}
\delta \psi_{min}\left(\lambda^{min}(\delta,\rho;\gamma),\gamma\right) + H(\rho\delta) - \delta\gamma H\left(\rho/\gamma\right) = 0, 
\quad for ~\lambda^{min}(\delta,\rho;\gamma) \leq 1-\gamma, \\
\label{lmaxnew}
\delta \psi_{max}\left(\lambda^{max}(\delta,\rho;\gamma),\gamma\right) + H(\rho\delta) - \delta\gamma H\left(\rho/\gamma\right) = 0, \quad for ~\lambda^{max}(\delta,\rho;\gamma) \geq 1+\gamma.
\end{eqnarray}

\noindent Let $\displaystyle \lambda^{min}(\delta,\rho) := \mathop{\max}_{\gamma} \lambda^{min}(\delta,\rho;\gamma) \quad and \quad \lambda^{max}(\delta,\rho) := \mathop{\min}_{\gamma} \lambda^{max}(\delta,\rho;\gamma)$ and define
\begin{equation}
\Lbt(\delta,\rho) := 1 - \lambda^{min}(\delta,\rho) \quad and \quad \Ubt (\delta,\rho) := \lambda^{max}(\delta,\rho) - 1.
\end{equation}
\end{definition}

That for each $(\delta,\rho;\gamma)$, \eqref{lminnew} and \eqref{lmaxnew} have a unique solution $\lambda^{min}(\delta,\rho;\gamma)$ and $\lambda^{max}(\delta,\rho;\gamma)$ respectively was proven in \cite{BCT}.  That $\lambda^{min}(\delta,\rho;\gamma)$ and $\lambda^{max}(\delta,\rho;\gamma)$ have unique maxima and minima respectively over $\gamma\in [\rho,\delta^{-1}]$ is established in Lemma \ref{uniqminmax}.

\begin{theorem}
\label{BTthm}
Let $A$ be a matrix of size $n\times N$ whose entries are drawn i.i.d. from $\mathcal{N}(0,1/n)$.  Let $\delta$ and $\rho$ be defined as in \eqref{lgasymp}, and $\Lbt(\delta,\rho)$ and $\Ubt(\delta,\rho)$ be defined as in Definition \ref{BTdfn}.
For any fixed $\epsilon>0$,  in the proportional-growth asymptotic,
\[
\mathbf{P}(L(k,n,N) < \Lbt(\delta,\rho) + \epsilon) \rightarrow 1 \quad and \quad \mathbf{P}(U(k,n,N) < \Ubt(\delta,\rho) + \epsilon) \rightarrow 1
\]
exponentially in n.
\end{theorem}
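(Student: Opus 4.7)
My plan is to exploit the observation that if two $k$-subsets $K,K'$ lie inside a common $m$-subset $M$, then $A_K^*A_K$ and $A_{K'}^*A_{K'}$ are both principal submatrices of the single symmetric matrix $A_M^*A_M$; Cauchy interlacing then gives
\begin{equation*}
\lambda^{\min}(A_M^*A_M) \leq \lambda^{\min}(A_K^*A_K) \quad \text{and} \quad \lambda^{\max}(A_K^*A_K) \leq \lambda^{\max}(A_M^*A_M)
\end{equation*}
for every $k$-subset $K \subset M$. I would fix $\gamma \in [\rho, \delta^{-1}]$, set $m = \lfloor \gamma n\rfloor$, and construct a covering family $\mathcal{F}$ of $m$-subsets of $\Omega$ such that every $k$-subset is contained in at least one $M \in \mathcal{F}$. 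A standard probabilistic/greedy argument produces such $\mathcal{F}$ with $|\mathcal{F}| \leq \lceil \ln\binom{N}{k}\rceil \cdot \binom{N}{k}/\binom{m}{k}$, only a polynomial-in-$n$ factor above the Sch\"onheim lower bound.

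For each fixed $M \in \mathcal{F}$, the $n \times m$ matrix $A_M$ has i.i.d.\ $\mathcal{N}(0,1/n)$ entries, so $A_M^*A_M$ is Wishart. Edelman's closed-form densities for its extreme eigenvalues \cite{edelman,EdRao}, combined with Stirling and Laplace's method exactly as in \cite{BCT} but applied with aspect ratio $\gamma = m/n$ in place of $\rho = k/n$, yield the large-deviation bounds
\begin{equation*}
\mathbf{P}\bigl(\lambda^{\max}(A_M^*A_M) \geq \lambda\bigr) \leq \mathrm{poly}(n)\, e^{n\psi_{\max}(\lambda,\gamma)} \quad (\lambda \geq 1+\gamma),
\end{equation*}
\begin{equation*}
\mathbf{P}\bigl(\lambda^{\min}(A_M^*A_M) \leq \lambda\bigr) \leq \mathrm{poly}(n)\, e^{n\psi_{\min}(\lambda,\gamma)} \quad (0 < \lambda \leq 1-\gamma),
\end{equation*}
with $\psi_{\max}, \psi_{\min}$ exactly as in \eqref{psimaxnew} and \eqref{psiminnew}. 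The lower-edge bound requires $\gamma \leq 1$, which is automatic in the lower-RIC analysis because of the feasibility condition $\lambda^{\min}(\delta,\rho;\gamma) \leq 1-\gamma$.

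Combining interlacing with a union bound over $\mathcal{F}$, the event $\{U(k,n,N) \geq \lambda - 1\}$ is contained in $\{\lambda^{\max}(A_M^*A_M) \geq \lambda$ for some $M \in \mathcal{F}\}$, so
\begin{equation*}
\tfrac{1}{n}\ln\mathbf{P}\bigl(U(k,n,N) \geq \lambda - 1\bigr) \leq \tfrac{1}{n}\ln|\mathcal{F}| + \psi_{\max}(\lambda,\gamma) + o(1).
\end{equation*}
By Stirling, $\tfrac{1}{n}\ln\binom{N}{k} \to \delta^{-1}H(\rho\delta)$ and $\tfrac{1}{n}\ln\binom{m}{k} \to \gamma H(\rho/\gamma)$; multiplying by $\delta$ the leading exponent is then exactly the left-hand side of \eqref{lmaxnew}. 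Defining $\lambda^{\max}(\delta,\rho;\gamma)$ as its unique root, strict monotonicity of $\psi_{\max}$ above $1+\gamma$ forces a strictly negative exponent for any $\lambda$ beyond this root. Minimizing over $\gamma \in [\rho,\delta^{-1}]$, with uniqueness supplied by Lemma \ref{uniqminmax}, produces the threshold $\lambda^{\max}(\delta,\rho) = 1 + \Ubt(\delta,\rho)$, and a continuity argument at fixed $\epsilon > 0$ completes the upper-RIC claim. The lower-RIC claim follows the same template using $\psi_{\min}$ with $\gamma$ restricted to $[\rho,1]$.

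The main technical hurdle is the sharp Wishart tail bound at the level of the rate functions $\psi_{\max}, \psi_{\min}$ with only a polynomial-in-$n$ prefactor: Edelman's density formulas involve ratios of gamma functions and hypergeometric-type factors, and extracting the exponential rate uniformly in $\gamma \in [\rho, \delta^{-1}]$ requires the careful Stirling asymptotics already developed in \cite{BCT}, reused here with $\gamma$ as the aspect-ratio parameter. The covering-design step is essentially free asymptotically, since $\ln\binom{N}{k} = O(n)$ contributes only $O(\ln n)$ to $\ln|\mathcal{F}|$. The payoff is then the $\gamma$-optimization: at $\gamma = \rho$ we have $\binom{m}{k} = 1$ and $\gamma H(\rho/\gamma)=0$, so one recovers the BCT bound, while for generic $(\delta,\rho)$ the minimizer lies strictly inside $(\rho,\delta^{-1})$ and produces a strict improvement.
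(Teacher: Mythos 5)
Your proposal is correct and follows essentially the same route as the paper: cover all $k$-subsets by a polynomially-redundant family of $m$-subsets (the paper's Lemma \ref{covering} uses $u=rN$ random draws, matching your coupon-collector bound), bound the within-group maximization by the extreme eigenvalues of the Wishart matrix $A_M^*A_M$ via interlacing, apply Edelman's tail bounds with aspect ratio $\gamma=m/n$ to get the rate functions \eqref{psiminnew}--\eqref{psimaxnew}, absorb the reduced combinatorial term $N\binom{N}{k}\binom{m}{k}^{-1}$ into the exponent $H(\rho\delta)-\delta\gamma H(\rho/\gamma)$, and optimize over $\gamma$ using Lemma \ref{uniqminmax}. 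The only cosmetic difference is that you make the Cauchy interlacing step explicit where the paper states it implicitly in \eqref{eq:unionbound}.
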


In the spirit of reproducible research, software and web forms that evaluate $\Lbt(\delta,\rho)$ and $\Ubt(\delta,\rho)$ are publicly available at \cite{ecos}.

\begin{figure}[h!]
\centering
\includegraphics[width=0.495\textwidth]{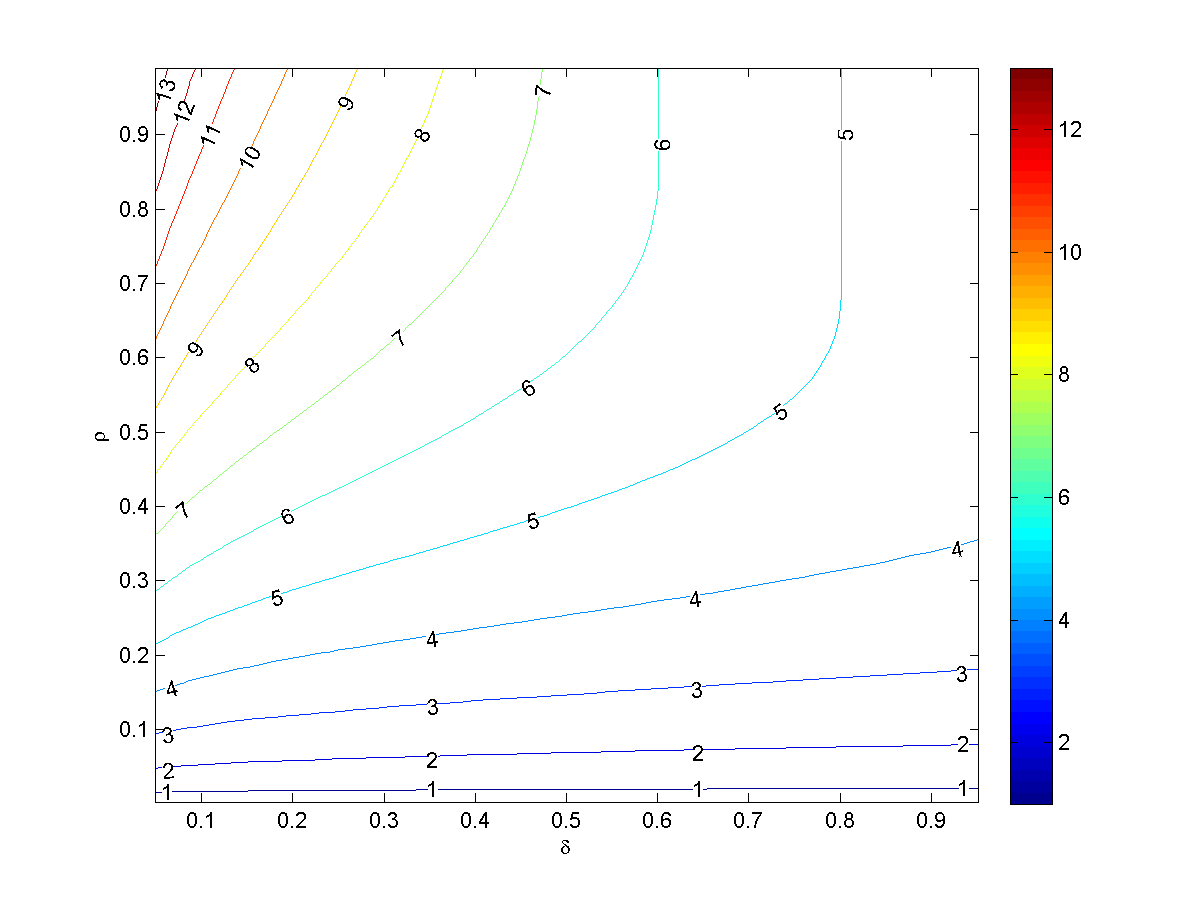}
\includegraphics[width=0.495\textwidth]{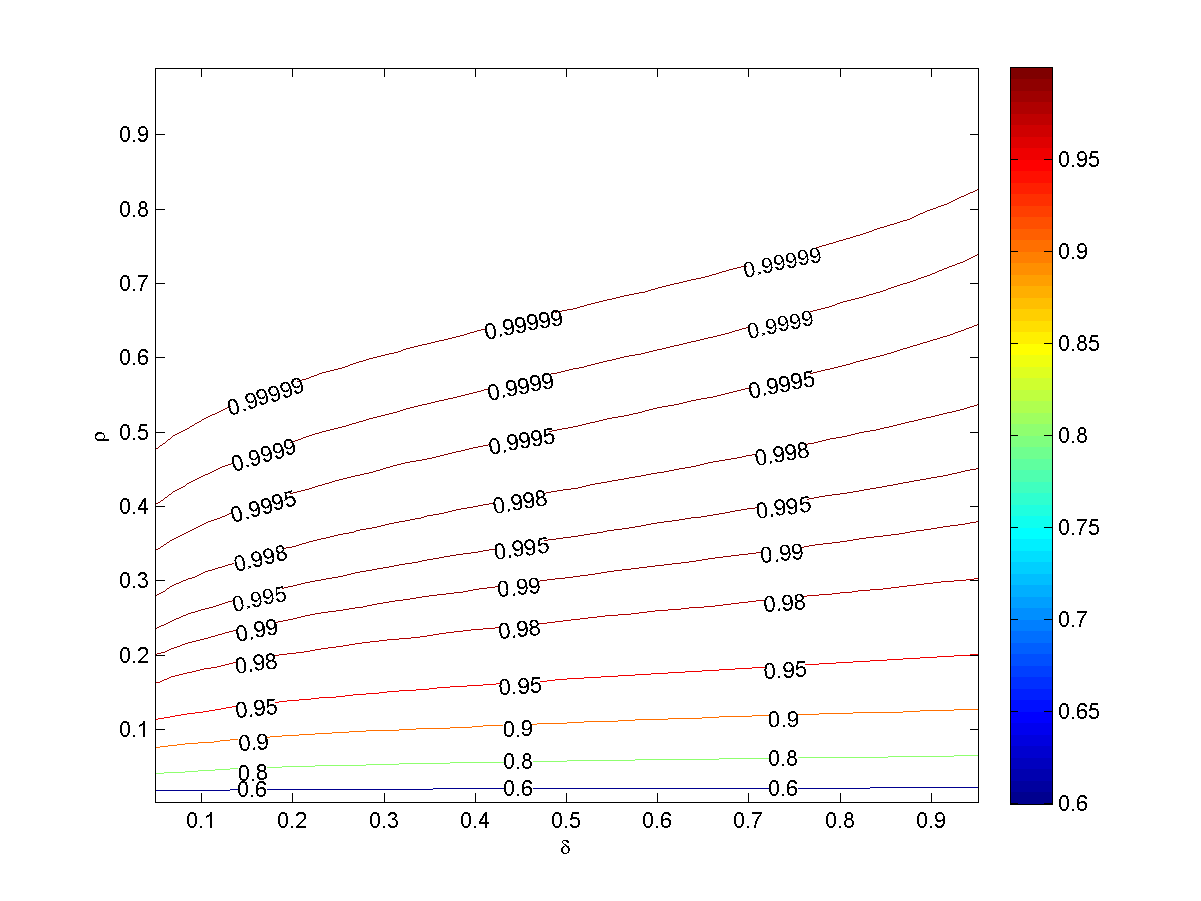}
\caption{$\Ubt(\delta,\rho)$ (left panel) and $\Lbt(\delta,\rho)$ (right panel) from Definition \ref{BTdfn} for $(\delta,\rho)\in (0,1)^2$.}
\label{UBT_LBT}
\end{figure}

Sharpness of the bounds can be probed by comparison with empirically observed lower bounds on the RIC for finite dimensional draws from the Gaussian ensemble.  There exist efficient algorithms for calculating lower bounds of RIC, \cite{Dossal,Ric}.  These algorithms perform local searches for submatrices with extremal eigenvalues.  The new bounds in Theorem \ref{BTthm}, see Figure \ref{UBT_LBT}, can be compared with empirical data displayed in Figure \ref{URic_LDos}.

\begin{figure}[h!]
\centering
\includegraphics[width=0.495\textwidth]{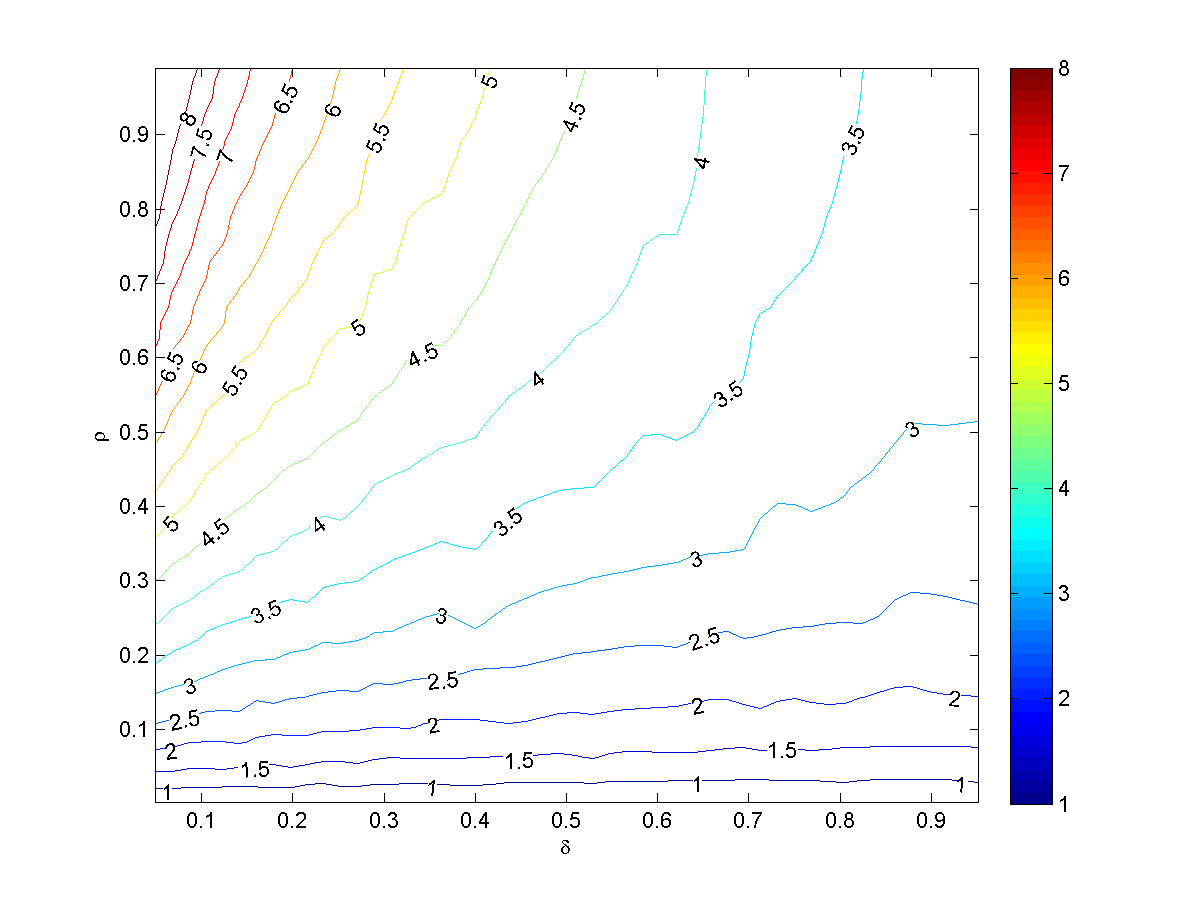}
\includegraphics[width=0.495\textwidth]{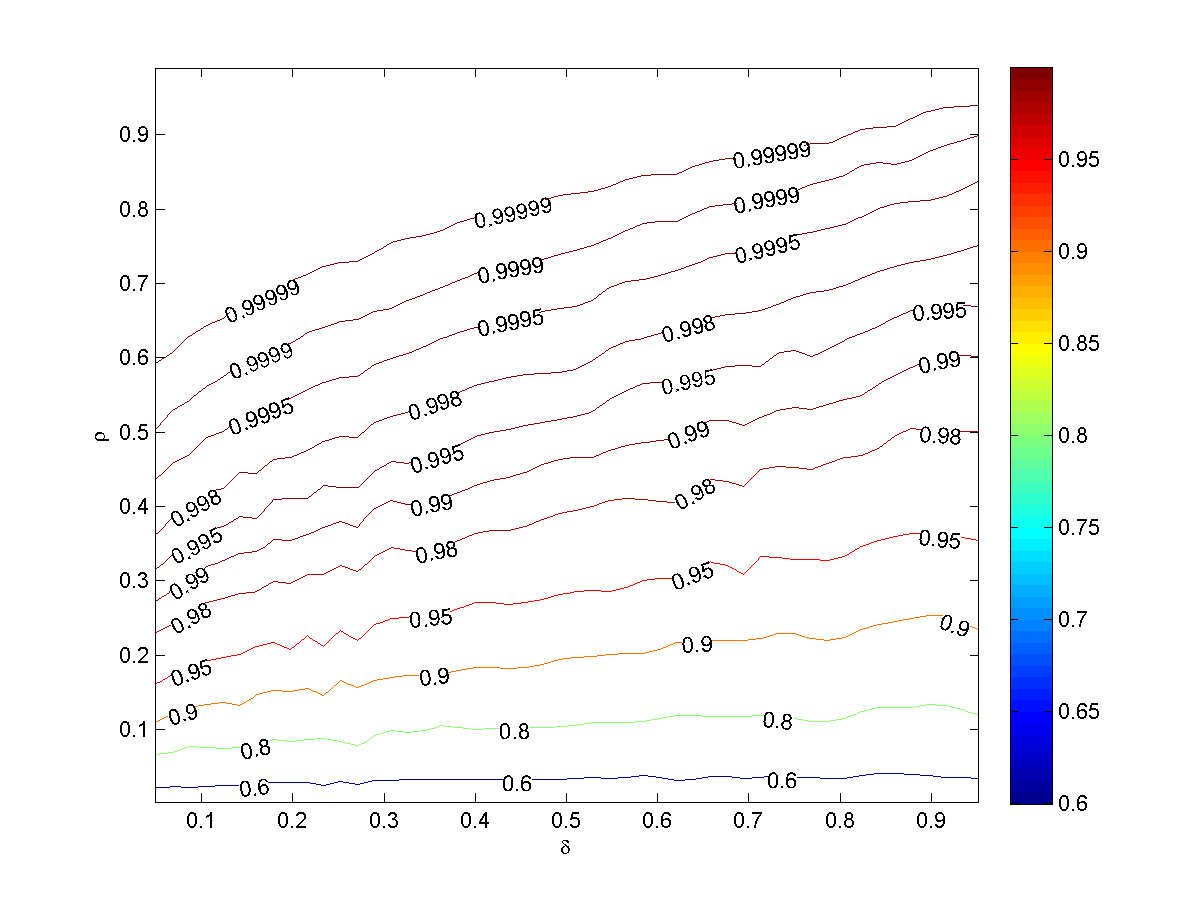}
\caption{\textbf{Empirically observed lower bounds on RIC for $A$ Gaussian.}
Observed lower bounds of $L(k, n,N;A)$ (left panel) and $U(k, n,N;A)$ (right panel). Although there is no computationally tractable method for calculating the RICs of a matrix, there are efficient algorithms which perform local searches for extremal eigenvalues of submatrices; allowing for observable lower bounds on the RICs. Algorithm for observing $L(k, n,N)$, \cite{Dossal}, and $U(k, n,N)$, \cite{Ric}, were applied to hundreds of $A$ drawn i.i.d. $\mathcal{N}(0,1/n)$ with $n = 400$ and $N$ increasing from $420$ to $8000$.}
\label{URic_LDos}
\end{figure}

To further demonstrate the sharpness of our bounds, we compute the maximum and minimum ``sharpness ratios'' of the bounds in Theorem \ref{BTthm} to empirically observed lower bounds; for each $\rho$, the maximum and minimum of the ratio is taken over all $\delta\in [0.05,0.9524]$, these are the same $\delta$ values used in Figure \ref{URic_LDos}. These ratios are shown in the left panel of Figure \ref{ULBT_ratios}, and are below 1.57 of the empirically observed lower bounds on $L(k,n,N)$ and $U(k,n,N)$ observed with $n=400$. 
\begin{figure}[h!]
\centering
\includegraphics[width=0.495\textwidth]{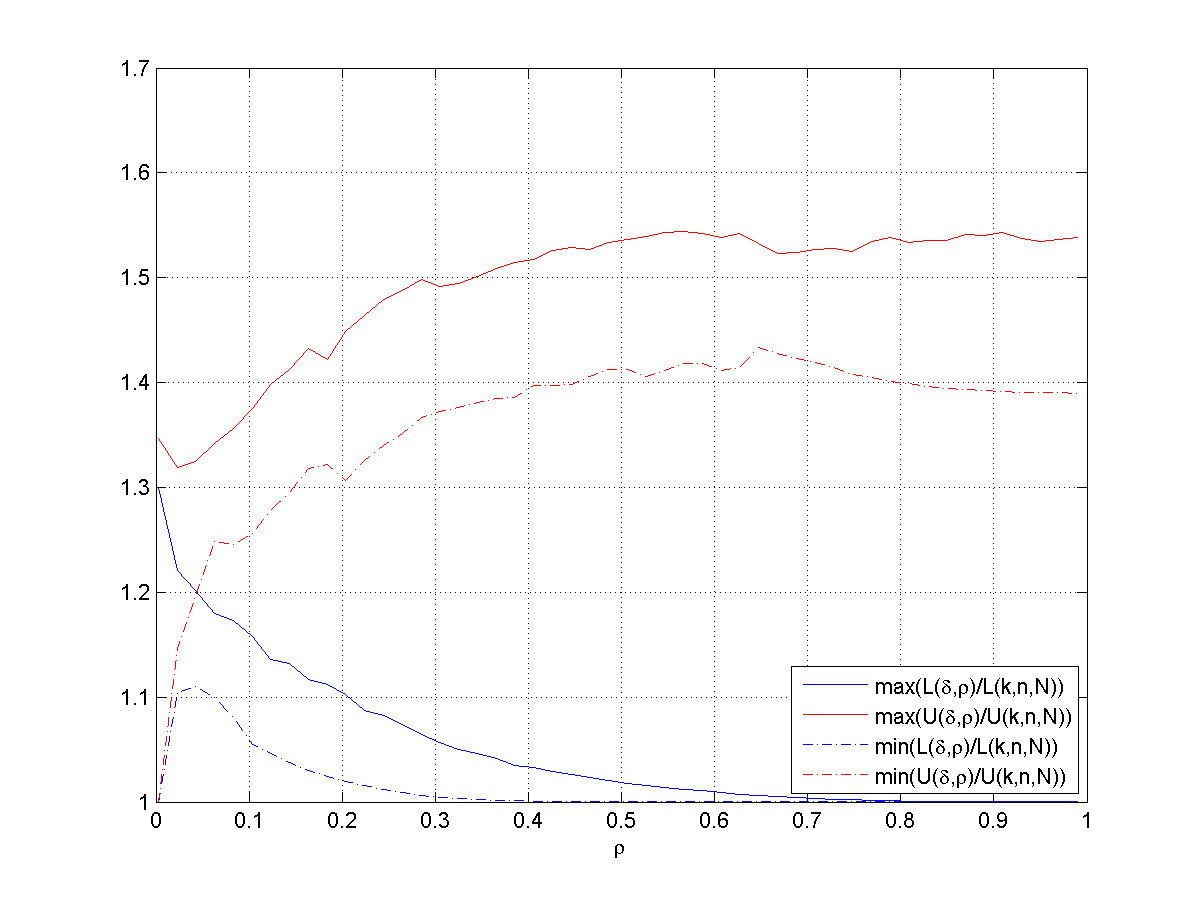}
\includegraphics[width=0.495\textwidth]{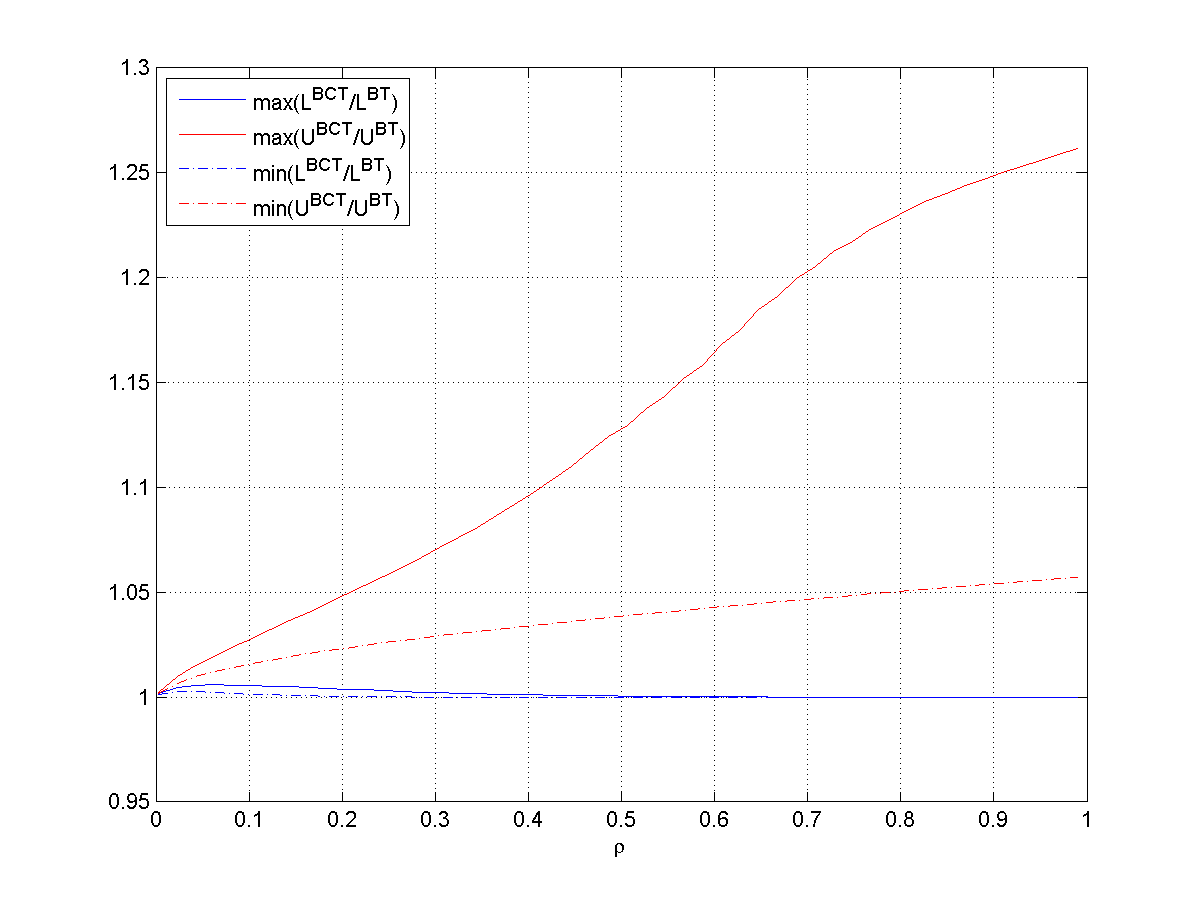}
\caption{Left panel: The maximum and minimum, over $\delta$, sharpness ratios, $\frac{\Ubt(\delta,\rho)}{U(k,n,N;A)}$ and $\frac{\Lbt(\delta,\rho)}{L(k,n,N;A)}$ as a function of $\rho$; with the maximum and minimum taken over all $\delta\in [0.05,0.9524]$, the same $\delta$ values used in Figure \ref{URic_LDos}.  Right panel:  The maximum and minimum, over $\delta$, improvement ratios over the previous best known bounds, $\frac{\Ubct(\delta,\rho)}{\Ubt(\delta,\rho)}$ and $\frac{\Lbct(\delta,\rho)}{\Lbt(\delta,\rho)}$ as a function of $\rho$; with the maximum and minimum also taken over $\delta\in [0.05,0.9524]$.}
\label{ULBT_ratios}
\end{figure}

\subsection{Discussion on the Construction of Improved RIC Bounds}\label{sec:boundcostruct} 
The bounds in Theorem \ref{BTthm} improve upon the earlier results of \cite{BCT} by grouping matrices $A_K$ and $A_{K'}$ which share a significant number of columns from $A$.  This is manifest in Definition \ref{BTdfn} through the introduction of the free parameter $\gamma$ which is associated with the size of group considered.  In this section we first discuss the way in which we construct these groups and the sense in which the bounds in Theorem \ref{BTthm} are optimal for this construction.  Equipped with a suitable construction of groups, we discuss the way in which this grouping is employed to improve the RIC bounds from \cite{BCT}.

\subsubsection{Construction of groups}
We construct our groups of $A_K$ by selecting a subset $M_i$ from ${1,2,\ldots,N}$ of cardinality $|M_i|=m\ge k$ and setting ${\cal G}_i:=\bigcup_{K\subset M_i,|K|=k} K$.  
The group ${\cal G}_i$ has ${m\choose k}$ members, with any two members sharing at least $2k-m$ elements.  Hence, the quantity $\gamma=m/n$ in Definition \ref{BTdfn} is associated with the cardinality of the groups ${\cal G}_i$.  In order to calculate bounds on the RIC of a matrix, we need a collection of groups whose union includes all ${N\choose k}$ sets of cardinality $k$ from $\Omega:=\{1,2,\ldots,N\}$; that is, we need $\{{\cal G}_i\}_{i=1}^u$ such that $G:=\bigcup_{i=1}^u {\cal G}_i$ with $|G|={N\choose k}$.  From simple counting, the minimum number of groups ${\cal G}_i$ needed for this covering is at least $r:={N \choose k}{m \choose k}^{-1}$.  Although the construction of a minimal covering is an open question \cite{Dan}, even a simple random construction of the ${\cal G}_i$'s requires typically only a polynomial multiple of $r$ groups, hence achieving the optimal large deviation rate.

\begin{lemma}[\cite{Dan}]\label{covering}
Set $r = {N \choose k} {m \choose k}^{-1}$ and draw $u:=r N$ sets $M_i$ each of cardinality $m$, drawn uniformly at random from the ${N \choose m}$ possible sets of cardinality $m$. With $G$ defined as above, 
\begin{equation}
\mathbf{P}\left(|G|<{N\choose k}\right)<C(k/N) N^{-1/2} e^{-N(1-\ln{2})}
\end{equation}
where $C(p)\le\frac{5}{4}(2\pi p(1-p))^{(-1/2)}$.
\end{lemma}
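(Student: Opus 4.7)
The plan is to prove Lemma \ref{covering} by a straightforward union bound over all $k$-subsets $K\subset\OO$, combined with an elementary Stirling-type estimate on $\binom{N}{k}$.

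First I would fix a target set $K\subset\OO$ with $|K|=k$ and compute the probability that a single randomly drawn $M_i$ of cardinality $m$ contains $K$. A simple count gives
\[
\mathbf{P}(K\subset M_i) = \frac{\binom{N-k}{m-k}}{\binom{N}{m}} = \frac{\binom{m}{k}}{\binom{N}{k}} = \frac{1}{r}.
\]
Since the $M_i$ are drawn independently, the probability that no $M_i$ contains $K$ (equivalently, that $K\notin G$) is
\[
\mathbf{P}(K\notin G) = \left(1-\frac{1}{r}\right)^{u} = \left(1-\frac{1}{r}\right)^{rN} < e^{-N},
\]
where the last inequality uses $(1-1/r)^r < e^{-1}$ for $r>1$.

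Next I would apply the union bound over all $\binom{N}{k}$ candidate $k$-subsets:
\[
\mathbf{P}\bigl(|G|<\tbinom{N}{k}\bigr) \;\le\; \sum_{K:\,|K|=k} \mathbf{P}(K\notin G) \;<\; \binom{N}{k}\, e^{-N}.
\]
The final step is to bound $\binom{N}{k}$ using a standard refinement of Stirling's formula. With $p=k/N$ and the natural-log Shannon entropy $H(p)$ used throughout the paper, one has
\[
\binom{N}{k} \le C(p)\, N^{-1/2}\, e^{NH(p)}, \qquad C(p)\le \tfrac{5}{4}\bigl(2\pi p(1-p)\bigr)^{-1/2},
\]
where the $5/4$ factor absorbs the Stirling error term for all $N\ge 1$. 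Since $H(p)\le\ln 2$ for every $p\in(0,1)$, substituting yields
\[
\mathbf{P}\bigl(|G|<\tbinom{N}{k}\bigr) < C(k/N)\,N^{-1/2}\, e^{N(H(k/N)-1)} \le C(k/N)\,N^{-1/2}\, e^{-N(1-\ln 2)},
\]
which is the claimed bound.

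There is no real obstacle here: the counting identities are exact, the independence of the $M_i$ reduces the product to $(1-1/r)^{rN}$, and everything else is a numerical bound on $\binom{N}{k}$. The only point deserving care is verifying the constant $5/4$ in $C(p)$, which follows from the explicit error bounds in Stirling's formula $n! = \sqrt{2\pi n}(n/e)^n e^{\theta_n/(12n)}$ with $0<\theta_n<1$; applied to each factorial in $\binom{N}{k}$ these error terms combine into a uniformly bounded multiplicative constant that can be verified to be at most $5/4$.
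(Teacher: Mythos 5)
Your proof is correct and follows essentially the same route as the paper: compute $\mathbf{P}(K\subset M_i)=1/r$, use independence to get $(1-1/r)^{rN}\le e^{-N}$, apply a union bound over the ${N\choose k}$ sets $K$, and finish with the Stirling estimate ${N\choose k}\le \frac{5}{4}(2\pi p(1-p)N)^{-1/2}e^{NH(p)}$ together with $H(p)\le\ln 2$. Your explicit verification of the counting identity $\binom{N-k}{m-k}/\binom{N}{m}=\binom{m}{k}/\binom{N}{k}=1/r$ is a welcome detail the paper states only in passing.
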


\begin{proof}
Select one set $K\subset \Omega$ of cardinality $|K|=k$ prior to the draw of the sets $M_i$.  The probability that it is not contained in one set $M_i$ is $1/r$, and with each $M_i$ drawn independently, the probability that it is not contained in any of the $u$ sets $M_i$ is 
$(1-r^{-1})^u\le e^{-u/r}$.  Applying a union bound over all ${N\choose k}$ sets $K$ yields 
\[
\mathbf{P}\left(|G|<{N\choose k}\right)<{N\choose k} e^{-u/r}.
\]
Noting from Stirling's Inequality that 
\begin{equation}\label{stirling}
\frac{16}{25}(2\pi p(1-p) N)^{(-1/2)} e^{NH(p)}\le
{N \choose pN}\le \frac{5}{4}(2\pi p(1-p) N)^{(-1/2)} e^{NH(p)},
\end{equation}
with $H(p)\le \ln{2}$ for $p\in[0,1]$,
and substituting in the selected value of $u$ completes the proof.  Note that an exponentially small probability can be obtained with $u$ just larger than $rNH(\delta\rho)$, but the smaller polynomial factor is negligible for our purposes.
\end{proof}

\begin{corollary}
\label{cover_corl}
Given Lemma \ref{covering}, as $n \rightarrow \infty$ in the proportional-growth asymptotics, the probability that all the ${N \choose k}$ $k$-subsets of ${1,2,\ldots,N}$ are covered by $G$ converges to one exponentially in $n$.
\end{corollary}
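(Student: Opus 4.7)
The plan is to apply Lemma~\ref{covering} essentially verbatim and then translate the resulting exponential decay in $N$ into exponential decay in $n$ using the proportional-growth asymptotics. There is nothing combinatorial left to prove; the only work is bookkeeping of the rate.

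First I would invoke Lemma~\ref{covering} to obtain
\[
\mathbf{P}\!\left(|G|<\tbinom{N}{k}\right) < C(k/N)\, N^{-1/2}\, e^{-N(1-\ln 2)},
\]
noting that $1-\ln 2 > 0$. Next I would observe that the polynomial prefactor $C(k/N) N^{-1/2}$ is harmless: by the bound $C(p)\le \tfrac{5}{4}(2\pi p(1-p))^{-1/2}$ stated in Lemma~\ref{covering}, and since $k/N = \rho_n\delta_n\to \rho\delta\in(0,1)$ under proportional-growth asymptotics, $C(k/N)$ is bounded above by a constant depending only on $(\rho,\delta)$ for all $n$ sufficiently large, while $N^{-1/2}\to 0$.

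Finally I would convert the rate. Since $n/N = \delta_n \to \delta \in (0,1)$, we have $N = n/\delta_n$, and for all $n$ large enough $\delta_n \le 2\delta$ (say), whence
\[
e^{-N(1-\ln 2)} = e^{-n(1-\ln 2)/\delta_n} \le e^{-\eta n},
\]
with $\eta := (1-\ln 2)/(2\delta) > 0$. Absorbing the bounded polynomial prefactor into a slightly smaller positive rate $\eta' < \eta$ yields
\[
\mathbf{P}\!\left(|G|<\tbinom{N}{k}\right) \le e^{-\eta' n}
\]
for all $n$ sufficiently large. Hence $\mathbf{P}(|G| = \binom{N}{k}) \to 1$ exponentially in $n$, which is the claim.

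The only potential pitfall is confirming that the prefactor $C(k/N)N^{-1/2}$ really is bounded in the regime of interest; since $\rho\delta \in (0,1)^2$ stays bounded away from $0$ and $1$, this is immediate, so I do not expect any substantive obstacle in this corollary.
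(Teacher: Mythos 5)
Your proposal is correct and is exactly the argument the paper intends: the corollary is stated without a separate proof because it follows immediately from Lemma \ref{covering} by noting that $N = n/\delta_n$ with $\delta_n \to \delta \in (0,1)$, so the $e^{-N(1-\ln 2)}$ decay is already exponential decay in $n$ and the prefactor $C(k/N)N^{-1/2}$ stays bounded since $k/N \to \rho\delta \in (0,1)$. Your bookkeeping of the rate (absorbing the prefactor into a slightly smaller exponent) is the standard and correct way to make this explicit.
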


\subsubsection{Decreasing the combinatorial term}

We illustrate the way the groups ${\cal G}_i$ are used to improve the RIC bound on the upper RIC bound $U(k,n,N;A)$; the bounds for $L(k,n,N;A)$ following by a suitable replacement of maximizations/minimizations and sign changes.  All previous bounds on the RIC for the Gaussian ensemble have overcome the combinatorial maximization/minimization by use a union bound over all ${N \choose k}$ sets $K\subset \Omega$ and then using a tail bound on the pdf of the extreme eigenvalues of $A^*_KA_K$; for some $\lambda^*_{max}>0$,
\[
\mathbf{P}\left(\max_{K\subset\Omega,|K|=k} \lambda^{max}(A^*_KA_K)>\lambda_{max}^*\right) 
\le {N\choose k} \mathbf{P}\left(\lambda^{max}(A^*_KA_K)>\lambda_{max}^*\right)
\]
That the random variables $\lambda^{max}(A^*_KA_K)$ are treated as independent is the principal deficiency of this bound.  To exploit dependencies of this variable for $K$ and $K^{'}$ with significant overlap we exploit the groupings ${\cal G}_i$, which, at least for $m$ moderately larger than $k$, contain sets with significant overlap.  For the moment we assume the groups $\{{\cal G}_i\}_{i=1}^u$ cover all $K\subset \Omega$, and replace the above maximization over $K$ with a double maximizations
\[
\mathbf{P}\left(\max_{K\subset\Omega,|K|=k} \lambda^{max}(A^*_KA_K)>\lambda_{max}^*\right) 
=
\mathbf{P}\left(\max_{i=1,\ldots,u} \max_{K\subset{\cal G}_i,|K|=k} \lambda^{max}(A^*_KA_K)>\lambda_{max}^*\right).
\]
The outer maximization can be bounded over all $u$ sets ${\cal G}_i$, again, using a simple union bound; however, with a smaller combinatorial term.  The dependencies between $\lambda^{max}(A^*_KA_K)$ for $K\subset{\cal G}_i$ can be incorporated in the bound by replacing the maximization over $K\subset {\cal G}_i$ by $\lambda^{max}(A^*_{M_i} A_{M_i})$ where $M_i$ is the subset of cardinality $m$ containing all $K\subset {\cal G}_i$,
\begin{equation}\label{eq:unionbound}
\mathbf{P}\left(\max_{i=1,\ldots,u} \max_{K\subset{\cal G}_i,|K|=k} \lambda^{max}(A^*_KA_K)>\lambda_{max}^*\right) 
\le
u \mathbf{P}\left(\lambda^{max}(A^*_MA_M)>\lambda_{max}^* \right).
\end{equation}

Selecting $m=k$ recovers the usual union bound with $u$ equal to ${N\choose k}$.  Larger values of $m$ decrease the combinatorial term at the cost of increasing $\lambda^{max}(A^*_MA_M)$.  The efficacy of this approach depends on the interplay between these two competing factors.  In the proportional-growth asymptotic, this interplay is observed through the optimization over $\frac{m}{n}=\gamma\in [\rho,\delta^{-1}]$.  Definition \ref{BTdfn} uses the tail bounds on the extreme eigenvalues of Wishart Matrices derived by Edelman \cite{edelman} to bound $\mathbf{P}\left(\lambda^{max}(A^*_MA_M)>\lambda_{max}^* \right)$.  The previously best known bound on the RIC for the Gaussian ensemble is recovered by selecting $\gamma=\rho$ in Definition \ref{BTdfn}, \cite{BCT}.  The innovation of the bounds in Theorem \ref{BTthm} follows from there always being a unique $\gamma>\rho$ such that  $\lambda^{max}(\delta,\rho;\gamma)$ is less than $\lambda^{max}(\delta,\rho;\rho)$.

\begin{lemma}\label{uniqminmax}
Given that $\lambda^{min}(\delta,\rho;\gamma)$ and $\lambda^{max}(\delta,\rho;\gamma)$ are solutions to (\ref{lminnew}) and (\ref{lmaxnew}) respectively.  For any fixed $(\delta,\rho)$ there exist a unique $\gamma_{min}\in [\rho,\delta^{-1}]$ which minimizes $\lambda^{max}(\delta,\rho;\gamma)$ and a unique $\gamma_{max}\in [\rho,\delta^{-1}]$ which maximizes $\lambda^{min}(\delta,\rho;\gamma)$.  Furthermore, $\gamma_{min}$ and $\gamma_{max}$ are strictly larger than $\rho$.
\end{lemma}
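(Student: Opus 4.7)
The plan is to view $\lambda^{max}(\delta,\rho;\gamma)$ and $\lambda^{min}(\delta,\rho;\gamma)$ as smooth curves in $\gamma$ defined implicitly by \eqref{lmaxnew} and \eqref{lminnew}, and to analyse their stationary points via implicit differentiation. Writing $F_{max}(\lambda,\gamma) := \delta\psi_{max}(\lambda,\gamma) + H(\rho\delta) - \delta\gamma H(\rho/\gamma)$, the strict monotonicity of $F_{max}$ in $\lambda$ on $\{\lambda\ge 1+\gamma\}$ established in \cite{BCT} (concretely $\partial_\lambda F_{max} = \tfrac{\delta}{2}(1+\gamma-\lambda)/\lambda < 0$) lets the implicit function theorem produce a $C^1$ function $\gamma \mapsto \lambda^{max}(\delta,\rho;\gamma)$ with $d\lambda^{max}/d\gamma = -\partial_\gamma F_{max}/\partial_\lambda F_{max}$, so that the sign of $d\lambda^{max}/d\gamma$ along the curve coincides with the sign of $\partial_\gamma F_{max}$.

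Next I would compute $\partial_\gamma F_{max}$ in closed form. Using $\partial_\gamma \psi_{max} = \tfrac12\ln(\lambda/\gamma)$ and the elementary identity $\tfrac{d}{d\gamma}[\gamma H(\rho/\gamma)] = \ln(\gamma/(\gamma-\rho))$, the first-order condition $\partial_\gamma F_{max}=0$ collapses to the explicit algebraic relation $\lambda = \gamma^3/(\gamma-\rho)^2$. Substituting this into \eqref{lmaxnew} turns the critical-point search into a single equation in $\gamma$, whose uniqueness in $[\rho,\delta^{-1}]$ I would establish by showing that $\partial_\gamma F_{max}$, restricted to the one-dimensional curve $\{F_{max}=0\}$, is strictly monotonic in $\gamma$. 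An exactly parallel argument, using $\psi_{min}$ and the constraint $\lambda \le 1-\gamma$, handles $\lambda^{min}$.

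To show $\gamma_{min}>\rho$, I would examine $\partial_\gamma F_{max}$ as $\gamma \to \rho^+$: the term $-\delta\ln(\gamma/(\gamma-\rho))$ diverges to $-\infty$ while $\tfrac{\delta}{2}\ln(\lambda/\gamma)$ stays bounded (since $\lambda^{max}(\delta,\rho;\rho)$ is finite by \cite{BCT}), hence $d\lambda^{max}/d\gamma \to -\infty$ as $\gamma\to\rho^+$. Thus $\lambda^{max}(\delta,\rho;\cdot)$ is strictly decreasing on a right-neighbourhood of $\rho$, forcing the unique interior critical point, and therefore the global minimizer, to lie strictly inside $(\rho,\delta^{-1}]$. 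The symmetric argument using the divergence of the same $\ln(\gamma/(\gamma-\rho))$ term delivers $\gamma_{max}>\rho$ for $\lambda^{min}$.

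The hard part will be the monotonicity-on-the-curve step underlying uniqueness of the critical point: a second implicit differentiation produces an expression in which the two logarithmic contributions partially cancel, and the sign determination is delicate. An attractive alternative is to prove outright that $\lambda^{max}(\delta,\rho;\cdot)$ is strictly convex and $\lambda^{min}(\delta,\rho;\cdot)$ strictly concave on $[\rho,\delta^{-1}]$, which would deliver uniqueness of the extremum immediately; deciding between the sign-change route and this convexity route is the principal technical choice in writing up the proof.
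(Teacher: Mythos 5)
Your proposal follows essentially the same route as the paper: implicit differentiation of \eqref{lmaxnew} and \eqref{lminnew} yields exactly the paper's formula $\frac{d\lambda^{max}}{d\gamma} = \frac{\lambda^{max}}{\lambda^{max}-(1+\gamma)}\ln\bigl[\lambda^{max}(\gamma-\rho)^2/\gamma^3\bigr]$, the same critical-point equation $\lambda^{max}(\gamma-\rho)^2=\gamma^3$ (and its analogue \eqref{dlmin} for $\lambda^{min}$), and the same mechanism for $\gamma_{min},\gamma_{max}>\rho$, namely that the logarithm diverges to $-\infty$ as $\gamma\to\rho^{+}$ so the curve is strictly monotone near $\gamma=\rho$. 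The one step you flag as the hard part --- uniqueness of the critical point, i.e.\ that the sign of the logarithm changes only once on $[\rho,\delta^{-1}]$ --- is precisely the point the paper disposes of by bare assertion, so your plan is not missing any idea that the paper's own proof supplies.
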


\begin{figure}[h!]
\centering
\includegraphics[width=0.495\textwidth]{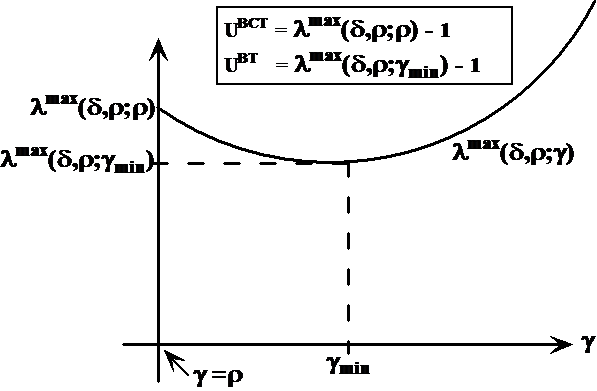}
\includegraphics[width=0.495\textwidth]{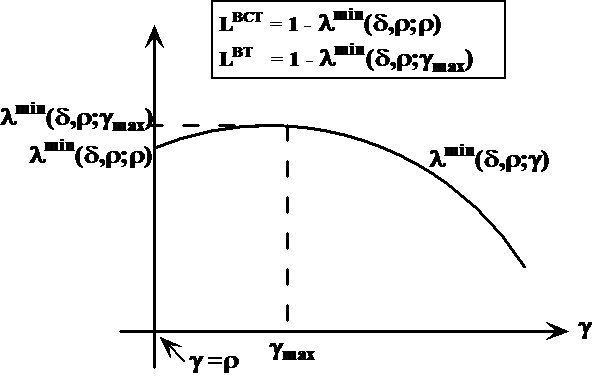}
\caption[$\lambda(\delta,\rho;\gamma)$]{Left panel: The relationship between the new bound $\Ubt(\delta,\rho)$, Theorem \ref{BTthm}, and the previous smallest bound $\Ubct(\delta,\rho)$, Theorem \ref{BCTthm}, where in the two bounds $\lambda^{max}(\delta,\rho;\gamma)$ is evaluated at $\gamma=\gamma_{min}$ and $\gamma=\rho$ respectively.  Right panel: The relationship between the new bound $\Lbt(\delta,\rho)$, Theorem \ref{BTthm}, and the previous smallest bound $\Lbct(\delta,\rho)$, Theorem \ref{BCTthm}, where in the two bounds $\lambda^{min}(\delta,\rho;\gamma)$ is evaluated at $\gamma=\gamma_{max}$ and $\gamma=\rho$ respectively.}.
\label{lambda}
\end{figure}

The optimal choices of $\gamma - \rho$ for $\Ubt(\rho,\delta)$ and $\Lbt(\rho,\delta)$ in $(\rho,\delta) \in (0,1)^{2}$ is displayed in Figure \ref{ULBT_gama}. 
The proof of Lemma \ref{uniqminmax} is presented in Appendix \ref{sec:Proofumm}.

\begin{figure}[h!]
\centering
\includegraphics[width=0.495\textwidth]{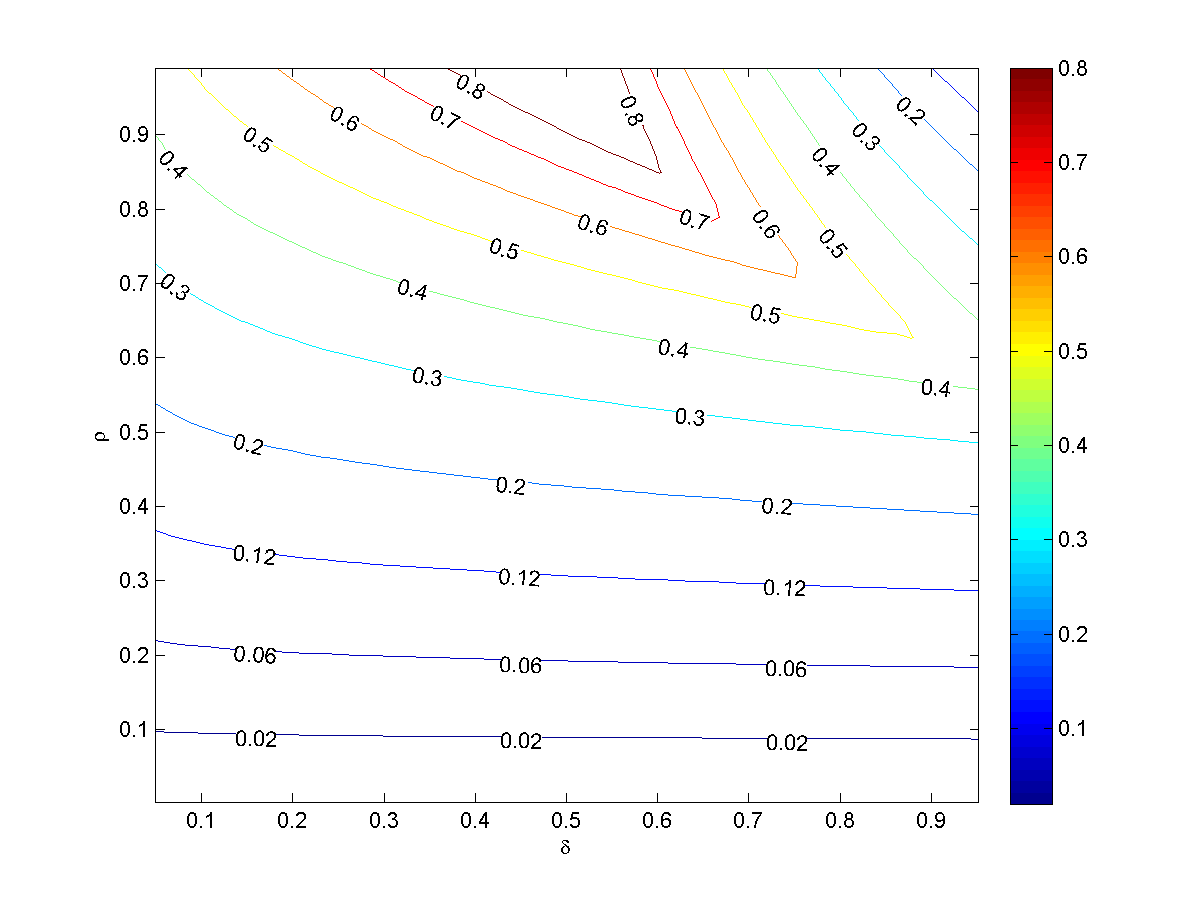}
\includegraphics[width=0.495\textwidth]{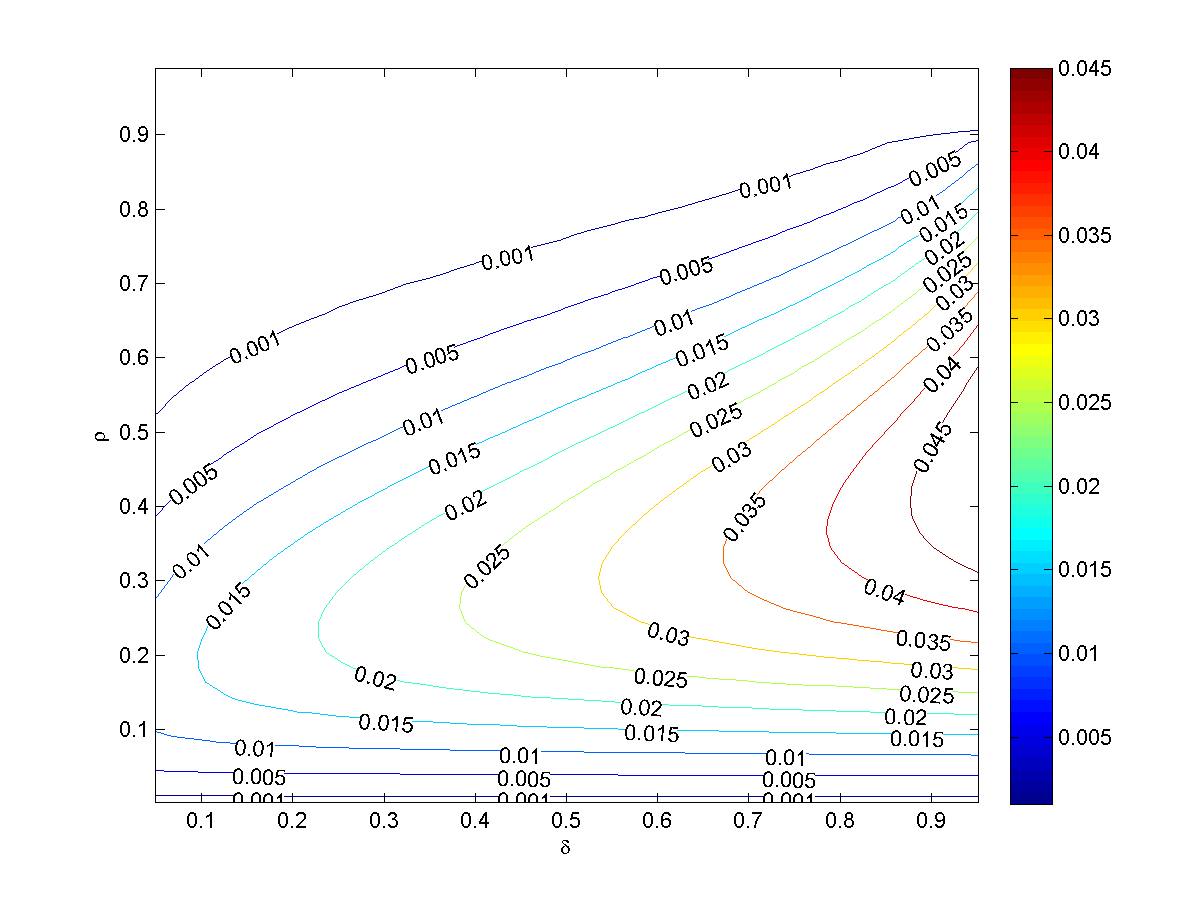}
\caption[$U(\delta,\rho;\gamma)$]{Optimal choice of $\gamma - \rho$ for $\Ubt(\delta,\rho)$ (left panel) and $\Lbt(\delta,\rho)$ (right panel).}
\label{ULBT_gama}
\end{figure}


\subsection{Prior RIP Bounds}\label{sec:PrRIPbound} 
There have been two previous quantitative bounds for the RIC of the Gaussian ensemble in the proportional-growth asymptotics.  The first bounds on the RIC of the Gaussian ensemble were supplied in \cite{CTDecoding} by Cand\`es and Tao using union bounds and concentration of measure bounds on the extreme eigenvalues of Wishart Matrices from \cite{ledoux}.  These bounds are stated in Theorem \ref{CTthm} with Definition \ref{CTdfn} defining some of the terms used in the theorem and plots of these bounds are displayed in Figure \ref{CTfig}.

\begin{definition}
\label{CTdfn}
Let $(\delta,\rho)\in (0,1)^2$ and define: 
\[
\Uct(\delta,\rho):=\left[1+\sqrt{\rho}+(2\delta^{-1}H(\delta\rho))^{1/2} \right]^2-1
\]
and
\[
\Lct(\delta,\rho):=  1-\max\left\{0,\left[ 1-\sqrt{\rho}-(2\delta^{-1}H(\delta\rho))^{1/2} \right]^2\right\}.
\]
\end{definition}

\begin{theorem}[Cand\'es and Tao \cite{CTDecoding}]
\label{CTthm}
Let $A$ be a matrix of size $n\times N$ whose entries are drawn i.i.d. from $\mathcal{N}(0,1/n)$.  Let $\delta$ and $\rho$ be defined as in \eqref{lgasymp}, and $\Lct(\delta,\rho)$ and $\Uct(\delta,\rho)$ be defined as in Definition \ref{CTdfn}.
For any fixed $\epsilon>0$,  in the proportional-growth asymptotic,
\[
\mathbf{P}(L(k,n,N) < \Lct(\delta,\rho) + \epsilon) \rightarrow 1 \quad and \quad \mathbf{P}(U(k,n,N) < \Uct(\delta,\rho) + \epsilon) \rightarrow 1
\]
exponentially in n.
\end{theorem}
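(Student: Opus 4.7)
The plan is to combine a Gaussian concentration bound for the extreme singular values of a single $n\times k$ submatrix with a union bound over all $\binom{N}{k}$ choices of $k$ columns of $A$; this is precisely the original Cand\`es--Tao argument.

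First, I would fix a single subset $K\subset\Omega$ with $|K|=k$ and invoke the Davidson--Szarek / Ledoux deviation inequality for an $n\times k$ Gaussian matrix with i.i.d.\ $\mathcal{N}(0,1/n)$ entries: for every $t>0$,
\begin{align*}
\mathbf{P}\bigl(\sigma_{max}(A_K)>1+\sqrt{k/n}+t\bigr) &\le e^{-nt^2/2},\\
\mathbf{P}\bigl(\sigma_{min}(A_K)<1-\sqrt{k/n}-t\bigr) &\le e^{-nt^2/2}.
\end{align*}
These follow from Gordon's slicing theorem for the means $\mathbf{E}\,\sigma_{max}(A_K)\le 1+\sqrt{k/n}$ and $\mathbf{E}\,\sigma_{min}(A_K)\ge 1-\sqrt{k/n}$, together with the fact that $\sigma_{max}$ and $\sigma_{min}$ are $1/\sqrt{n}$--Lipschitz functions of the Gaussian entries, so that Gaussian isoperimetry yields the stated $e^{-nt^2/2}$ tail.

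Second, I would apply a union bound over the $\binom{N}{k}$ possible index sets $K$,
\[
\mathbf{P}\Bigl(\max_{|K|=k}\sigma_{max}(A_K)>1+\sqrt{k/n}+t\Bigr)\le \binom{N}{k}\,e^{-nt^2/2},
\]
and similarly for $\min_{|K|=k}\sigma_{min}(A_K)$. Invoking Stirling in the form \eqref{stirling}, one has $\binom{N}{k}\le q(n)\,e^{NH(\delta\rho)}$ for some polynomial $q$, so the exponent on the right is $NH(\delta\rho)-\tfrac12 nt^2 = n\bigl(\delta^{-1}H(\delta\rho)-\tfrac12 t^2\bigr)+o(n)$. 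Choosing $t=\sqrt{2\delta^{-1}H(\delta\rho)}+\epsilon'$ for any fixed $\epsilon'>0$ makes this exponent strictly negative, so the bad event decays exponentially in $n$.

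Finally, I would translate back to the RIC via the identities $U(k,n,N;A)=\max_{|K|=k}\sigma_{max}^2(A_K)-1$ and $L(k,n,N;A)=1-\min_{|K|=k}\sigma_{min}^2(A_K)$. Squaring the high-probability singular-value bounds and letting $k/n\to\rho$ in the proportional-growth regime yields $\Uct(\delta,\rho)=\bigl(1+\sqrt{\rho}+\sqrt{2\delta^{-1}H(\delta\rho)}\bigr)^2-1$ and the corresponding formula for $\Lct$, with $\epsilon'$ chosen small enough to absorb the slack into the ``$+\epsilon$'' of the theorem statement. The $\max\{0,\cdot\}$ appearing in $\Lct$ simply handles the regime $1-\sqrt{\rho}-\sqrt{2\delta^{-1}H(\delta\rho)}\le 0$, where the Davidson--Szarek lower bound is vacuous and only the trivial bound $L<1$ is available. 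The main technical input is the sharp Gaussian concentration of extreme singular values from \cite{ledoux}; once it is in hand, the rest of the argument is a routine large-deviation computation, and the only point requiring care is to track the $1/\sqrt{n}$ scaling of the entries so that the Lipschitz constant, and hence the concentration rate $n$, is correct.
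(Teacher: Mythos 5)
Your argument is correct and follows exactly the route this paper attributes to the original source: the paper does not reprove Theorem \ref{CTthm} but simply cites \cite{CTDecoding}, describing the proof as ``union bounds and concentration of measure bounds on the extreme eigenvalues of Wishart Matrices from \cite{ledoux}'', which is precisely your Davidson--Szarek concentration plus union bound over ${N \choose k}$ supports plus Stirling computation. The only points worth noting are that your $\epsilon'$-absorption and the $\max\{0,\cdot\}$ discussion (where $L\le 1$ holds trivially) are handled correctly, so nothing is missing.
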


The bounds in Theorem \ref{BTthm} follow the construction of the second bounds on the RIC for the Gaussian ensemble, presented in \cite{BCT}.  Removing the optimization of $\gamma\in[\rho,\delta^{-1}]$ in Definition \ref{BTdfn} and fixing $\gamma=\rho$ recovers the bounds on $L(k,n,N;A)$ and the first of two bounds on $U(k,n,N;A)$ presented in \cite{BCT}.  The first bound on $U(k,n,N;A)$ in \cite{BCT} suffer from excessive overestimation when $\delta\rho\approx 1/2$ due to the combinatorial term.  In fact, this overestimation is so severe that for some $(\delta,\rho)$ with $\delta\rho\approx 1/2$, smaller bounds are obtained at $(\delta,1)$.  This overestimation is somewhat ameliorated by noting the monotonicity of $U(k,n,N;A)$ in $k$, obtaining the improved bound, see \eqref{eq:Ubct}.  These bounds are stated in Theorem \ref{BCTthm} with Definition \ref{BCTdfn} defining some of the terms used in the theorem and plots of these bounds are displayed in Figure \ref{BCTfig}.

\begin{definition}
\label{BCTdfn} 
Let $(\delta,\rho)\in(0,1)^2$, and denote the Shannon Entropy with base $e$ logarithms as $H(p):=p\ln(1/p)+(1-p)\ln(1/(1-p))$.  Let $\psi_{min}(\lambda,\rho)$ and $\psi_{max}(\lambda,\rho)$ be defined as in \eqref{psiminnew} and \eqref{psimaxnew} respectively.
\noindent\textit{Define $\lambda^{min}(\delta,\rho)$ and $\lambda^{max}(\delta,\rho)$ as the solution to (\ref{lmin}) and (\ref{lmax}) respectively:} 
\begin{equation}
\label{lmin}
\delta \psi_{min}(\lambda^{min}(\delta,\rho),\rho) + H(\rho\delta) = 0 \quad for \quad \lambda^{min}(\delta,\rho) \leq 1-\rho,
\end{equation}
\begin{equation}
\label{lmax}
\delta \psi_{max}(\lambda^{max}(\delta,\rho),\rho) + H(\rho\delta) = 0 \quad for \quad \lambda^{max}(\delta,\rho) \geq 1+\rho.
\end{equation}

\noindent\textit{Define $\Lbct(\delta,\rho)$ and $\Ubct(\delta,\rho)$ as}
\begin{equation}\label{eq:Ubct}
\Lbct(\delta,\rho) := 1 - \lambda^{min}(\delta,\rho) \quad and \quad 
\Ubct(\delta,\rho) := \mathop{\min}_{\nu \in [\rho,1]} \lambda^{max}(\delta,\nu) - 1.
\end{equation}
\end{definition}

\begin{figure}[h!]
\centering
\includegraphics[width=0.495\textwidth]{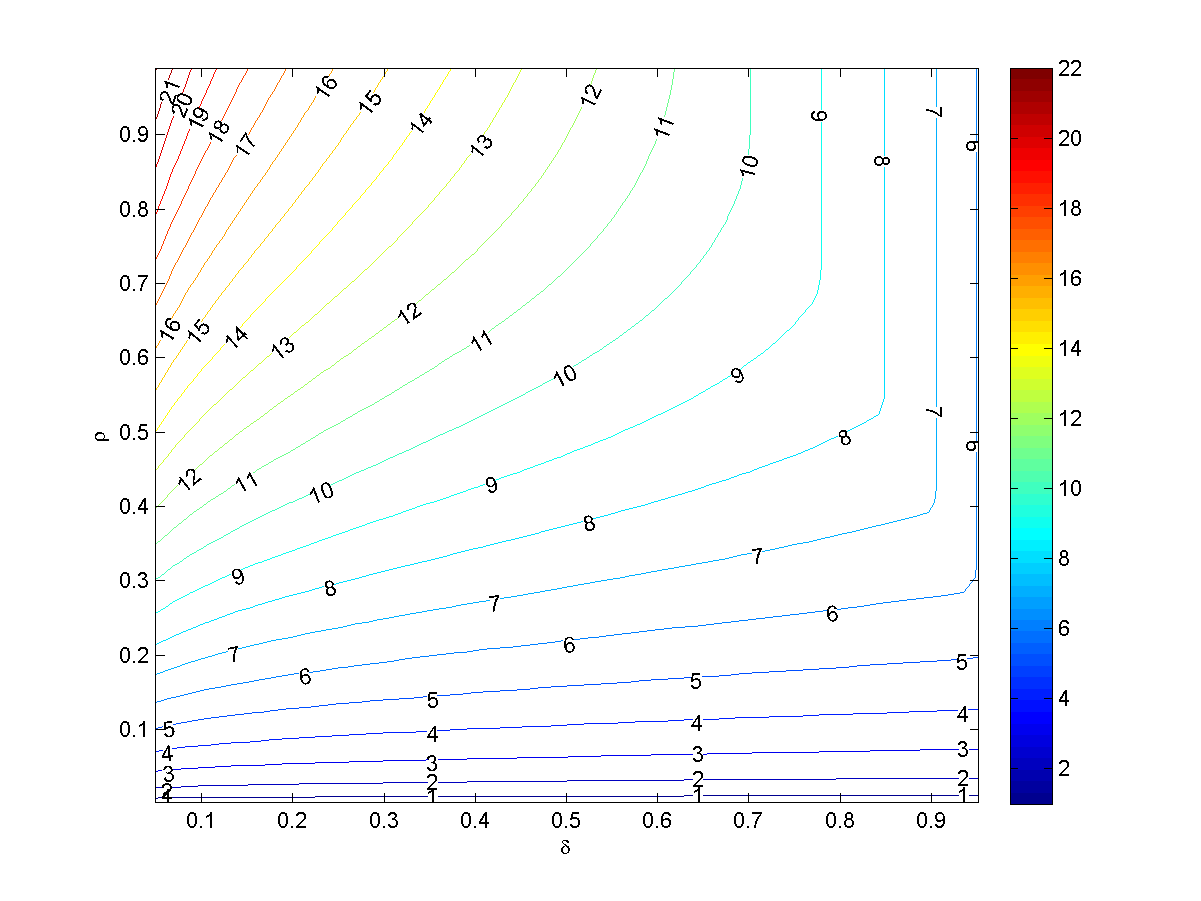}
\includegraphics[width=0.495\textwidth]{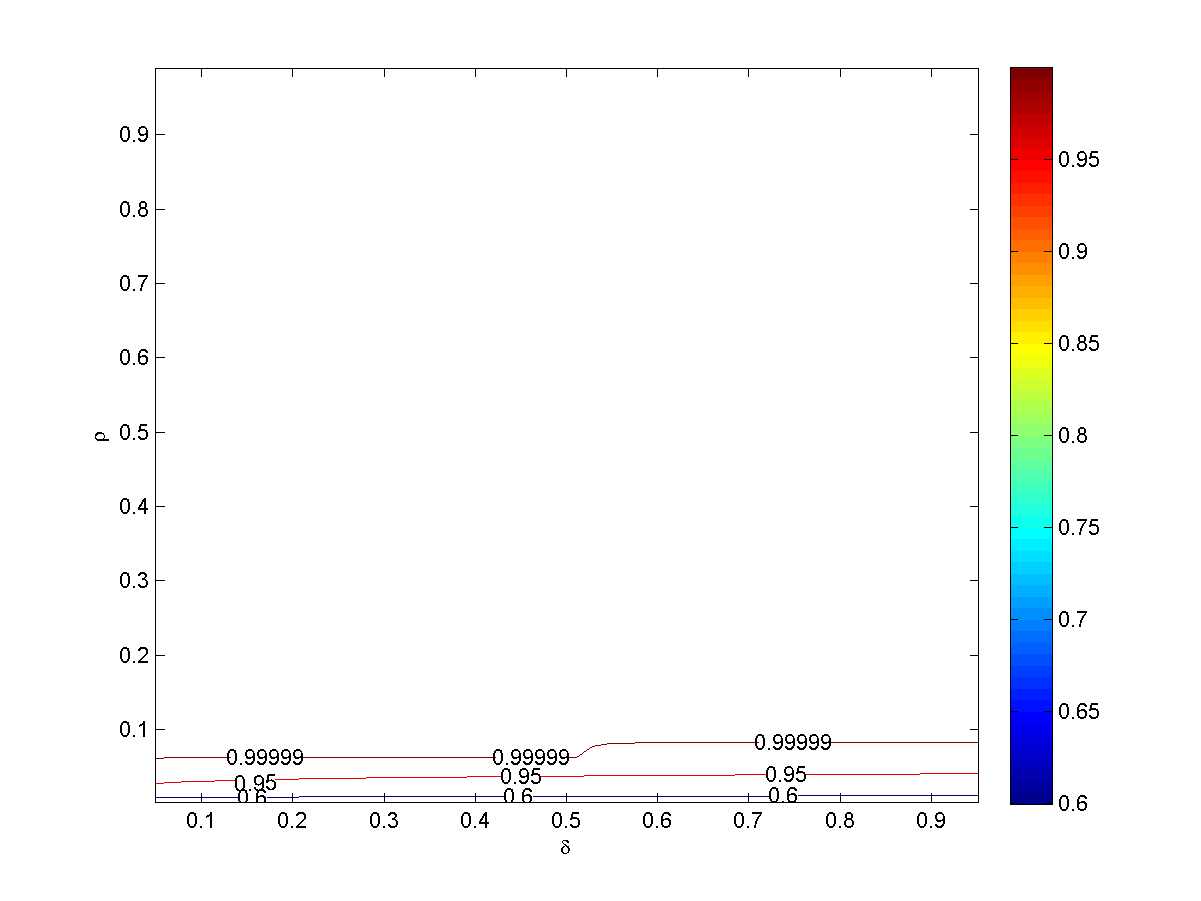}
\caption{$\Uct(\delta,\rho)$ (left panel) and  $\Lct(\delta,\rho)$ (right panel) from Definition \ref{CTdfn} for $(\delta,\rho)\in (0,1)^2$.}
\label{CTfig}
\end{figure}

\begin{figure}[h!]
\centering
\includegraphics[width=0.495\textwidth]{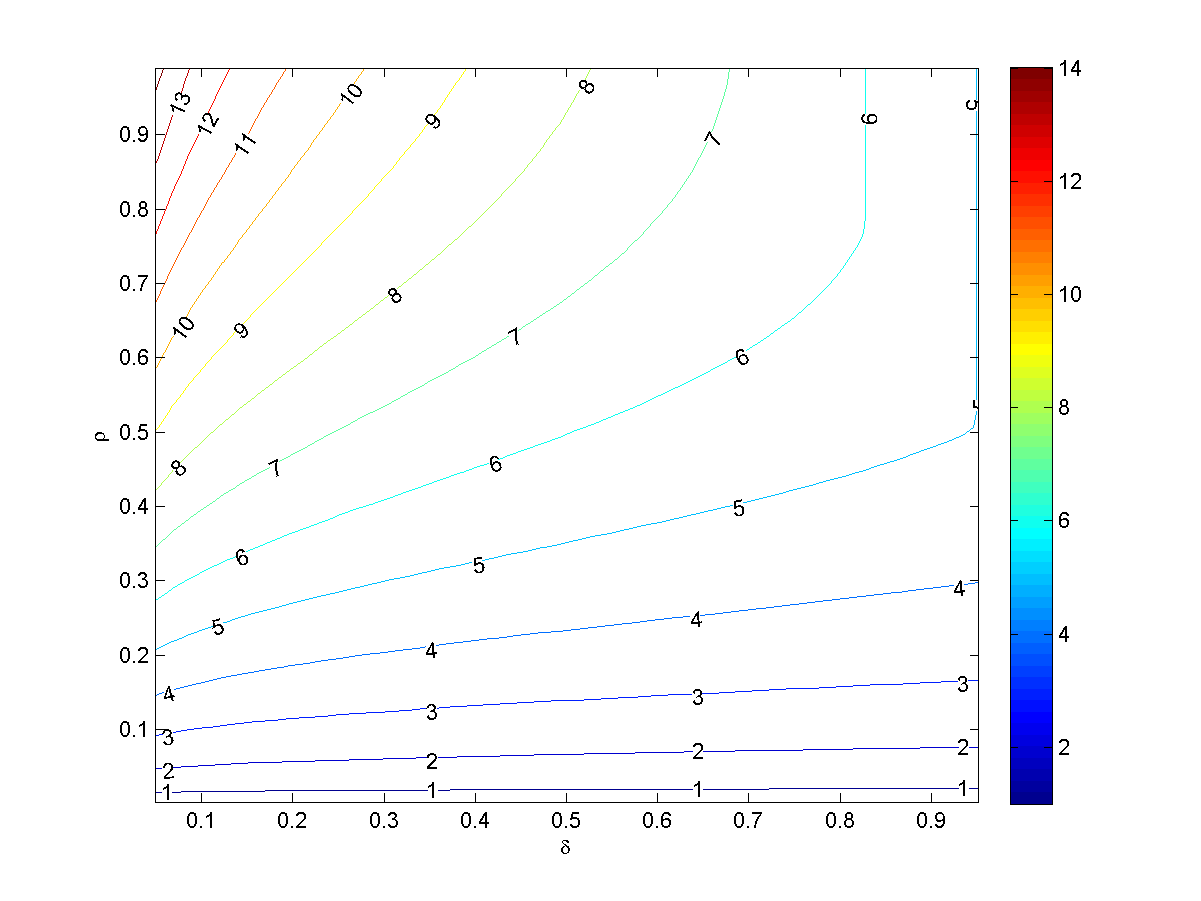}
\includegraphics[width=0.495\textwidth]{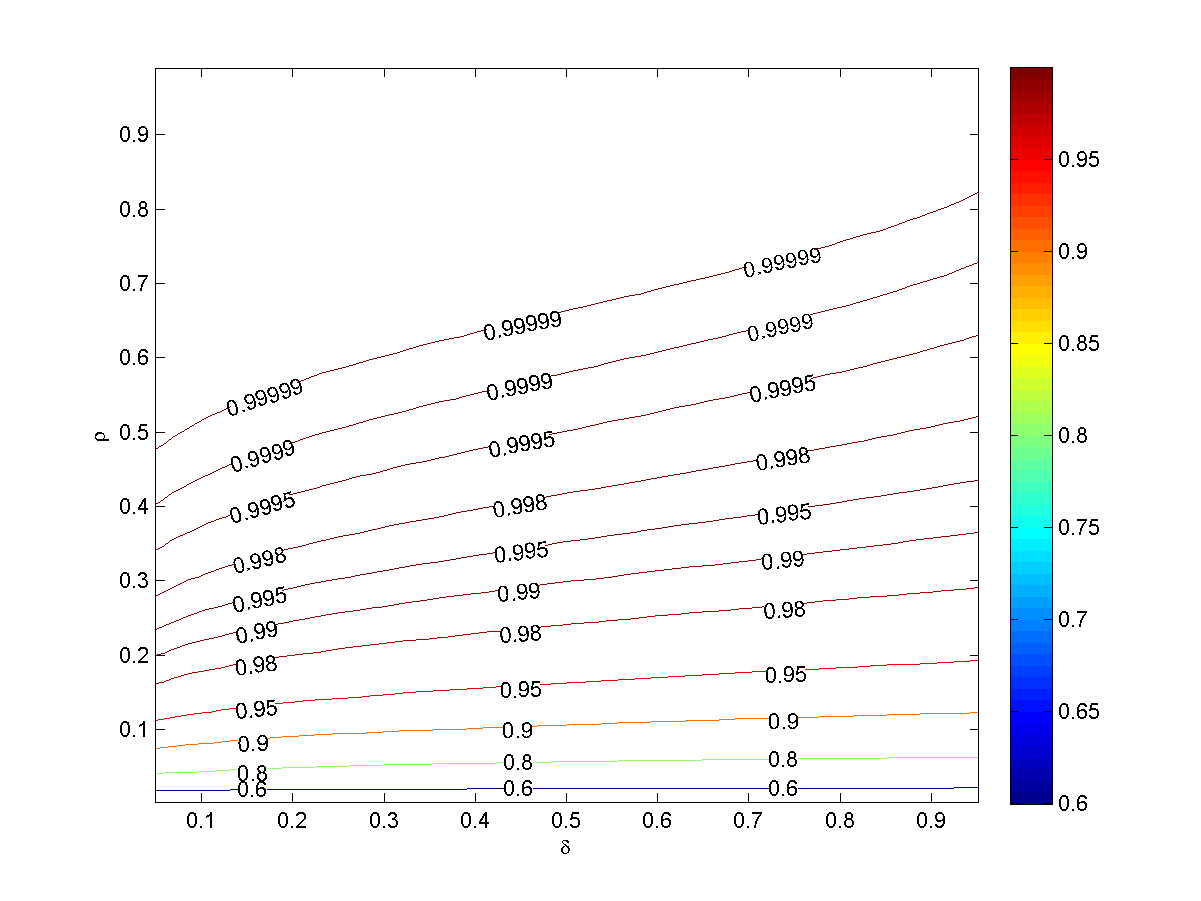}
\caption[$\Ubct(\delta,\rho)$]{$\Ubct(\delta,\rho)$ (left panel) and $\Lbct(\delta,\rho)$ (right panel) from Definition \ref{BCTdfn} for $(\delta,\rho)\in (0,1)^2$.}
\label{BCTfig}
\end{figure}

\begin{theorem}[Blanchard, Cartis, and Tanner \cite{BCT}]\label{BCTthm} 
Let $A$ be a matrix of size $n\times N$ whose entries are drawn i.i.d. from $\mathcal{N}(0,1/n)$.  Let $\delta$ and $\rho$ be defined as in \eqref{lgasymp}, and $\Lbct(\delta,\rho)$ and $\Ubct(\delta,\rho)$ be defined as in Definition \ref{BCTdfn}.
For any fixed $\epsilon>0$,  in the proportional-growth asymptotic,
\[
\mathbf{P}(L(k,n,N;A) < \Lbct(\delta,\rho) + \epsilon) \rightarrow 1 \quad and \quad \mathbf{P}(U(k,n,N;A) < \Ubct(\delta,\rho) + \epsilon) \rightarrow 1
\]
exponentially in n.
\end{theorem}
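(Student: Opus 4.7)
The plan is to execute the union-bound strategy described in Section \ref{sec:boundcostruct} in its degenerate form, i.e.\ with group size $m=k$ (equivalently $\gamma=\rho$), so that each group is a single support set and the covering of Lemma \ref{covering} is trivial. First I would invoke Edelman's large-deviation estimates for the extreme eigenvalues of a Wishart matrix \cite{edelman}: for $A_K$ of size $n\times k$ with i.i.d.\ $\mathcal{N}(0,1/n)$ entries,
\begin{align*}
\mathbf{P}\bigl(\lambda^{max}(A_K^{*} A_K) \ge \lambda\bigr) &\le p(n)\, e^{\,n\,\psi_{max}(\lambda,k/n)} \quad\text{for } \lambda \ge 1 + k/n, \\
\mathbf{P}\bigl(\lambda^{min}(A_K^{*} A_K) \le \lambda\bigr) &\le p(n)\, e^{\,n\,\psi_{min}(\lambda,k/n)} \quad\text{for } \lambda \le 1 - k/n,
\end{align*}
with $p(n)$ a polynomial factor and $\psi_{max},\psi_{min}$ as in \eqref{psimaxnew} and \eqref{psiminnew}. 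These follow by integrating the tail of Edelman's joint eigenvalue density and extracting the exponential rate.

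Next I would apply the union bound over all $\binom{N}{k}$ support sets, using Stirling's inequality \eqref{stirling} in the form $\binom{N}{k} \le \tilde p(N)\, e^{N H(k/N)}$. Writing $k/n = \rho_n \to \rho$ and $n/N=\delta_n\to\delta$, this yields
\[
\mathbf{P}\!\left(\max_{|K|=k}\lambda^{max}(A_K^{*} A_K) \ge \lambda\right) \le P(n)\exp\!\left(\tfrac{n}{\delta_n}\bigl[H(\delta_n\rho_n)+\delta_n\psi_{max}(\lambda,\rho_n)\bigr]\right),
\]
with $P(n)$ polynomial, and an analogous estimate for the minimum eigenvalue with $\psi_{min}$. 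In the proportional-growth asymptotic this decays exponentially in $n$ whenever $H(\delta\rho)+\delta\psi_{max}(\lambda,\rho)<0$, and the defining equation \eqref{lmax} locates $\lambda=\lambda^{max}(\delta,\rho)$ as the boundary at which the exponent vanishes. Strict monotonicity of $\psi_{max}(\cdot,\rho)$ on $[1+\rho,\infty)$ ensures this root is unique, so any $\lambda > \lambda^{max}(\delta,\rho)$ lies strictly in the region of exponential decay; hence $\mathbf{P}(U(k,n,N;A) < \lambda^{max}(\delta,\rho) - 1 + \epsilon) \to 1$ for every $\epsilon>0$. The analogous argument using $\psi_{min}$ and \eqref{lmin} gives the bound on $L(k,n,N;A)$. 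The sharper form $\Ubct(\delta,\rho) = \min_{\nu\in[\rho,1]}\lambda^{max}(\delta,\nu)-1$ in \eqref{eq:Ubct} then follows from monotonicity of the upper RIC in $k$: the inclusion $\chi^{N}(k)\subset\chi^{N}(k')$ for $k\le k'$ yields $U(k,n,N;A)\le U(k',n,N;A)$, so applying the preceding argument at sparsity $k'=\nu n$ and infimizing over $\nu\in[\rho,1]$ gives the claim.

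The main obstacle will be the first step, namely converting Edelman's exact but unwieldy density formulas into the clean exponential tail bounds with rate functions $\psi_{max}$ and $\psi_{min}$. The delicate points are (i) integrating over a semi-infinite tail without inflating the exponential rate, and (ii) controlling the polynomial prefactors uniformly on a neighborhood of the large-deviation boundary so that they can be absorbed into the $\epsilon$-slack in the proportional-growth statement. Everything else---Stirling's inequality, the union bound, monotonicity of $\psi_{max}$ and $\psi_{min}$ in $\lambda$ needed for uniqueness of the defining roots, and monotonicity of $U$ in $k$---is standard once the tail bound is in place.
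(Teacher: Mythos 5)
Your proposal is correct and follows essentially the same route as the paper: the union bound over all $\binom{N}{k}$ support sets with Edelman's tail bounds (Lemmas \ref{pdfmax} and \ref{pdfmin}) and Stirling's inequality \eqref{stirling} is exactly the argument of \cite{BCT}, which the paper recovers as the degenerate case $\gamma=\rho$ (equivalently $m=k$) of its own proof of Propositions \ref{propU} and \ref{propL}. Your treatment of the extra minimization over $\nu\in[\rho,1]$ in \eqref{eq:Ubct} via monotonicity of $U(k,n,N;A)$ in $k$ also matches the paper's stated justification.
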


Figures \ref{CTfig} and \ref{BCTfig} show that the bounds in Theorem \ref{BCTthm} are a substantial improvement to those in Theorem \ref{CTthm}. The bounds presented here in Definition \ref{BTdfn} and Theorem \ref{BTthm} are a further improvement over those in \cite{BCT}, as implied by Lemma \ref{uniqminmax}.

\begin{corollary}\label{improvement}
Let $\Lbt(\delta,\rho)$ and $\Ubt(\delta,\rho)$ be defined as in Definition \ref{BTdfn} and $\Lbct(\delta,\rho)$ and $\Ubct(\delta,\rho)$ be defined as in Definition \ref{BCTdfn}.  For any fixed $(\delta,\rho)\in (0,1)^2$,
\[
\Lbt(\delta,\rho)<\Lbct(\delta,\rho)\quad\quad \Ubt(\delta,\rho)<\Ubct(\delta,\rho).
\]
\end{corollary}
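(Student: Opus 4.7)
The plan is to reduce both parts of the corollary to Lemma \ref{uniqminmax}, together with the observation that, because $H(\rho/\rho)=H(1)=0$, the BT equations \eqref{lminnew}, \eqref{lmaxnew} at $\gamma=\rho$ reduce to the BCT equations \eqref{lmin}, \eqref{lmax}. Writing $\lambda^{max}_{BT}(\delta,\rho;\gamma)$ and $\lambda^{min}_{BT}(\delta,\rho;\gamma)$ for the BT quantities of Definition \ref{BTdfn}, this observation gives $\lambda^{max}_{BCT}(\delta,\nu)=\lambda^{max}_{BT}(\delta,\nu;\nu)$ and the analogous identity for $\lambda^{min}$.

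For the lower bound this observation alone suffices. One has $\Lbct(\delta,\rho)=1-\lambda^{min}_{BT}(\delta,\rho;\rho)$ and $\Lbt(\delta,\rho)=1-\max_{\gamma\in[\rho,\delta^{-1}]}\lambda^{min}_{BT}(\delta,\rho;\gamma)$, and Lemma \ref{uniqminmax} says the latter maximum is uniquely achieved at some $\gamma_{max}>\rho$, so the max strictly exceeds its value at $\gamma=\rho$ and $\Lbt(\delta,\rho)<\Lbct(\delta,\rho)$.

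For the upper bound the argument needs slightly more, since $\Ubct$ already contains an inner minimization over $\nu\in[\rho,1]$ coming from monotonicity of $U(k,n,N;A)$ in $k$. The key intermediate claim is
\[
\lambda^{max}_{BT}(\delta,\rho;\nu)\ \leq\ \lambda^{max}_{BT}(\delta,\nu;\nu)\qquad\text{for every }\nu\in[\rho,1],
\]
with strict inequality whenever $\rho<\nu<1$. Subtracting the two corresponding instances of \eqref{lmaxnew} (both with outer parameter $\gamma=\nu$, but with different sparsity arguments $\rho$ and $\nu$) and using that $\psi_{max}(\lambda,\nu)$ is strictly decreasing in $\lambda$ on $\lambda\geq 1+\nu$, this claim reduces to the entropy inequality
\[
H(\nu\delta)+\delta\nu H(\rho/\nu)\ \geq\ H(\rho\delta),
\]
again strict for $\rho<\nu<1$. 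This is the asymptotic (Stirling) form of $\binom{N}{m}\binom{m}{k}\geq\binom{N}{k}$ already used in the proof of Lemma \ref{covering}, so no new combinatorics is required.

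Picking $\nu^*\in[\rho,1]$ to minimize $\lambda^{max}_{BT}(\delta,\nu;\nu)$, a short case analysis concludes. If $\nu^*>\rho$ the intermediate claim is strict at $\nu^*$ and gives
\[
\Ubt(\delta,\rho)+1\ \leq\ \lambda^{max}_{BT}(\delta,\rho;\nu^*)\ <\ \lambda^{max}_{BT}(\delta,\nu^*;\nu^*)\ =\ \Ubct(\delta,\rho)+1.
\]
If instead $\nu^*=\rho$, then $\Ubct(\delta,\rho)+1=\lambda^{max}_{BT}(\delta,\rho;\rho)$, and strictness comes directly from Lemma \ref{uniqminmax}, which supplies a $\gamma_{min}>\rho$ with $\lambda^{max}_{BT}(\delta,\rho;\gamma_{min})<\lambda^{max}_{BT}(\delta,\rho;\rho)$. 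The only real obstacle is the entropy inequality above; the combinatorial derivation via \eqref{stirling} is much cleaner than attempting a direct calculus argument, and it is precisely the mechanism that drives the rest of the paper's improvement.
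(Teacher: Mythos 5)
Your proof is correct, and it is actually more complete than the paper's own treatment: the paper offers no explicit argument for Corollary \ref{improvement}, merely asserting that it is ``implied by Lemma \ref{uniqminmax}.'' That assertion is immediate for the lower bound, exactly as in your second paragraph, since $H(\rho/\rho)=0$ makes \eqref{lminnew} at $\gamma=\rho$ coincide with \eqref{lmin} and Lemma \ref{uniqminmax} places the unique maximizer strictly above $\rho$. For the upper bound, however, the same reduction only yields $\Ubt(\delta,\rho)<\lambda^{max}(\delta,\rho)-1$, i.e.\ an improvement over the \emph{first} (un-optimized) bound of \cite{BCT}, whereas $\Ubct(\delta,\rho)$ carries the additional inner minimization over $\nu\in[\rho,1]$ in \eqref{eq:Ubct}. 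You correctly identify this gap and close it with the comparison $\lambda^{max}(\delta,\rho;\nu)\le\lambda^{max}(\delta,\nu;\nu)$, reduced via monotonicity of $\psi_{max}(\cdot,\nu)$ on $[1+\nu,\infty)$ to the entropy inequality $H(\nu\delta)+\delta\nu H(\rho/\nu)\ge H(\rho\delta)$, which is the Stirling exponent of ${N\choose m}{m\choose k}={N\choose k}{N-k\choose m-k}\ge{N\choose k}$; this is a genuine addition to the paper's argument and is the right mechanism. One small tidying remark: your strictness window $\rho<\nu<1$ omits the possible minimizer $\nu^*=1$, but since the entropy surplus equals $(1-\rho\delta)H\bigl(\delta(\nu-\rho)/(1-\rho\delta)\bigr)$, it vanishes only at $\nu=\rho$ or $\nu=\delta^{-1}>1$, so strictness in fact holds throughout $(\rho,1]$ and the case $\nu^*=1$ needs no separate appeal to Lemma \ref{uniqminmax}.
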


The right panel of Figure \ref{ULBT_ratios} shows the ratio of the previously best known bounds, Definition \ref{BCTdfn} to the new bounds, Definition \ref{BTdfn}; for each $\rho$, the ratio is maximized over $\delta\in [0.05,0.9524]$.


\subsection{Finite $N$ Interpretations}\label{sec:FiniteN} The method of proof used to obtain the proportional-growth asymptotic bounds in Definition \ref{BTdfn}  also provides, albeit less elegant,  bounds valid for finite values of $(k,n,N)$ and specified probabilities of the bound being satisfied.  For a specified problem instance $(k,n,N)$ and $\epsilon$, bounds on the probabilities $\mathbf{P}\left(U(k,n,N) > \Ubt(\delta,\rho) + \epsilon \right)$ and $\mathbf{P}\left(L(k,n,N) > \Lbt(\delta,\rho) + \epsilon \right)$ are given in Propositions \ref{propU} and \ref{propL} respectively. \\

\begin{proposition} \label{propU} Let $A$ be a matrix of size $n\times N$ whose entries are drawn i.i.d. from $\mathcal{N}(0,1/n).$ Define $\Ubt(\delta,\rho)$ as in Definition \ref{BTdfn}. Then for any $\epsilon> 0,$
\begin{eqnarray}
\mathbf{P}\Big(U(k,n,N) > U(\delta_{n},\rho_{n}) + \epsilon \Big) & \leq & p'_{max}\Big(n,\lambda^{max}(\delta_{n},\rho_{n})\Big) \exp\Bigg(n\epsilon \cdot \frac{d}{d\lambda} \psi_{U}\bigg(\lambda^{max}(\delta_{n},\rho_{n})\bigg)\Bigg)  \nonumber\\
\label{ppmax}
& + & \frac{5}{4}(2\pi k(1-k/N))^{-1/2}\exp(-N(1-\ln{2})) ,
\end{eqnarray}
where  
\begin{equation}
p'_{max}(n,\lambda):= \left(\frac{8}{\pi}\right)^{1/2}\frac{2 n^{-7/2}}{\sqrt{\gamma\lambda}} 
\left(\frac{5}{4}\right)^3\left(\frac{nN(\gamma-\rho)}{\gamma\delta(1-\rho\delta)}\right)^{1/2},
\end{equation}
and
\[ \psi_U(\lambda,\gamma):=\delta^{-1}\left[H(\rho\delta) - \delta\gamma H\left(\frac{\rho}{\gamma}\right) + \delta \psi_{max}(\lambda, \gamma) \right] \]
for $\psi_{max}(\lambda, \gamma)$ define in (\ref{psimaxnew}). \\
\end{proposition}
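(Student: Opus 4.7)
The strategy is exactly the one outlined in Section \ref{sec:boundcostruct}: replace the union bound over all $\binom{N}{k}$ submatrices by a union bound over a much smaller collection of $u$ groups of submatrices, each group controlled by the largest eigenvalue of a single $n\times m$ Gaussian block with $m = \gamma n$. The plan is to fix $\gamma$ as the optimizer $\gamma_{min}$ produced by Lemma \ref{uniqminmax}, set $m = \gamma n$, and apply Lemma \ref{covering} with $u := rN$ randomly drawn sets $\{M_i\}_{i=1}^u$ of cardinality $m$. On the event that $\bigcup_i {\cal G}_i$ covers every $k$-subset of $\Omega$, which fails with probability at most $\frac{5}{4}(2\pi k(1-k/N))^{-1/2}\exp(-N(1-\ln 2))$ and provides the second term of \eqref{ppmax}, one has
\[
U(k,n,N;A) \le \max_{1\le i\le u} \lambda^{max}\bigl(A_{M_i}^* A_{M_i}\bigr) - 1,
\]
since $\lambda^{max}(A_K^* A_K) \le \lambda^{max}(A_M^* A_M)$ whenever $K\subset M$.

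Next I would apply a union bound over $i$ and the Wishart tail estimate used in \cite{BCT,edelman}, written in the form
\[
\mathbf{P}\bigl(\lambda^{max}(A_M^* A_M) > \lambda\bigr) \le p_{max}(n,\lambda)\exp\bigl(n\,\psi_{max}(\lambda,\gamma)\bigr)
\]
for an explicit polynomial prefactor $p_{max}$ coming from Edelman's density bound for an $n\times m$ Gaussian. Combining this with the Stirling estimate \eqref{stirling} to control
\[
u \le rN = N\,\frac{\binom{N}{k}}{\binom{m}{k}} \le \mathrm{poly}(n)\exp\!\left(n\,\delta^{-1}\bigl[H(\rho\delta) - \delta\gamma H(\rho/\gamma)\bigr]\right),
\]
all polynomial factors can be aggregated into $p'_{max}(n,\lambda)$ and all exponents into $n\,\psi_U(\lambda,\gamma)$, yielding
\[
\mathbf{P}\Bigl(\max_i \lambda^{max}(A_{M_i}^* A_{M_i}) > \lambda\Bigr) \le p'_{max}(n,\lambda)\exp\bigl(n\,\psi_U(\lambda,\gamma)\bigr).
\]

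The final step is to evaluate this at $\lambda = \lambda^{max}(\delta_n,\rho_n) + \epsilon$. By the defining equation \eqref{lmaxnew}, $\psi_U(\lambda^{max}(\delta_n,\rho_n),\gamma) = 0$. Since the second derivative of $\psi_{max}(\cdot,\gamma)$ in $\lambda$ is $-(1+\gamma)/(2\lambda^2) < 0$, the function $\psi_{max}$ is strictly concave in $\lambda$, and hence so is $\psi_U$, which differs from $\delta^{-1}\cdot\delta\psi_{max}$ only by terms independent of $\lambda$. Concavity and the vanishing of $\psi_U$ at $\lambda^{max}(\delta_n,\rho_n)$ give
\[
\psi_U\bigl(\lambda^{max}(\delta_n,\rho_n)+\epsilon,\gamma\bigr) \le \epsilon \cdot \frac{d}{d\lambda}\psi_U\bigl(\lambda^{max}(\delta_n,\rho_n),\gamma\bigr),
\]
which produces the first exponential in \eqref{ppmax}. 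Adding the covering-failure probability from Lemma \ref{covering} completes the argument.

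The main obstacle is bookkeeping rather than conceptual: pinning down the explicit prefactor $p'_{max}(n,\lambda)$ requires carefully combining the $(8/\pi)^{1/2}(\gamma\lambda)^{-1/2}$ constant from Edelman's density bound, the $(5/4)$ factors from each application of \eqref{stirling}, the additional $N$ coming from $u = rN$, and the $\sqrt{(\gamma-\rho)/(\gamma\delta(1-\rho\delta))}$ factor that emerges from the ratio $\binom{N}{k}/\binom{m}{k}$ and combines with the factor of $N$ to give $\sqrt{nN(\gamma-\rho)/(\gamma\delta(1-\rho\delta))}$. All other steps are direct consequences of Lemma \ref{covering}, the concavity of $\psi_{max}$, and the Wishart tail estimate.
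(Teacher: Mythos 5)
Your proposal follows essentially the same route as the paper's proof: cover the $\binom{N}{k}$ supports by $u=rN$ groups via Lemma \ref{covering}, union-bound over groups using Edelman's bound on the largest Wishart eigenvalue, absorb the combinatorial factor via Stirling into $\psi_U$ and $p'_{max}$, and finish with concavity of $\psi_U$ at its zero $\lambda^{max}(\delta_n,\rho_n)$. The only step you pass over lightly is converting Edelman's \emph{density} bound into the tail bound you quote --- the paper does this by integrating $g_{max}$ over $[\lambda^{max}+\epsilon,\infty)$ using monotonicity of $\lambda^{\frac{n}{2}(1+\gamma)}e^{-\frac{n}{2}\lambda}$, which is where the extra factor $2\lambda$ in $p'_{max}$ originates --- but this is bookkeeping of the kind you already flag, not a conceptual gap.
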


\begin{proposition} \label{propL} Let $A$ be a matrix of size $n\times N$ whose entries are drawn i.i.d. from $\mathcal{N}(0,1/n).$ Define $\Lbt(\delta,\rho)$ as in Definition \ref{BTdfn}.  Then for any $\epsilon> 0,$ 
\begin{eqnarray}
\mathbf{P}\Big(L(k,n,N) > L(\delta_{n},\rho_{n}) + \epsilon \Big) & \leq & p'_{min}\Big(n,\lambda^{min}(\delta_{n},\rho_{n})\Big) \exp\Bigg(n\epsilon \cdot \frac{d}{d\lambda} \psi_{L}\bigg(\lambda^{min}(\delta_{n},\rho_{n})\bigg)\Bigg)\nonumber\\
\label{ppmin}
& + & \frac{5}{4}(2\pi k(1-k/N))^{-1/2}\exp(-N(1-\ln{2})),
\end{eqnarray}
where
\begin{equation}
p'_{min}(n,\lambda):=\left(\frac{5}{4}\right)^3  \frac{e\sqrt{\lambda}}{\pi\sqrt{2}} \left(\frac{nN(\gamma-\rho)}{\gamma\delta(1-\rho\delta)}\right)^{1/2},
\end{equation}
and
\[ \psi_L(\lambda,\gamma):=\delta^{-1}\left[H(\rho\delta) - \delta\gamma H\left(\frac{\rho}{\gamma}\right) + \delta \psi_{min}(\lambda, \gamma) \right] \]
for $\psi_{min}(\lambda, \gamma)$ define in (\ref{psiminnew}). \\
\end{proposition}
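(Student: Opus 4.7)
The proof plan follows the strategy of Section \ref{sec:boundcostruct}, specialized to the lower RIC, and closely mirrors Proposition \ref{propU}. Three ingredients drive the argument: the random covering of all $k$-subsets of $\Omega$ by $m$-supersets (Lemma \ref{covering}), Cauchy interlacing under principal submatrix restriction, and Edelman's pdf bound for the smallest eigenvalue of a Wishart matrix.

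First I would fix $\gamma = \gamma_{max} \in (\rho,\delta^{-1}]$ achieving the maximum in the definition of $\lambda^{min}(\delta,\rho)$, set $m = \gamma n$, and draw $u = rN$ independent $m$-subsets $M_1,\ldots,M_u$ uniformly from $\Omega$, where $r = {N\choose k}{m\choose k}^{-1}$. By Lemma \ref{covering}, the event that some $K\subset\Omega$ of size $k$ is missed by every $M_i$ has probability at most the second summand in \eqref{ppmin}. On the complement, every $A_K$ sits as a column-selection inside some $A_{M_i}$, so Cauchy interlacing yields
\[ \min_{K\subset\Omega,\,|K|=k} \lambda^{min}(A_K^* A_K) \ge \min_{i=1,\ldots,u} \lambda^{min}(A_{M_i}^* A_{M_i}), \]
and a union bound gives, for any $\lambda^* > 0$,
\[ \mathbf{P}\bigl(L(k,n,N) > 1-\lambda^*\bigr) \le u\, \mathbf{P}\bigl(\lambda^{min}(A_M^* A_M) < \lambda^*\bigr) + \tfrac{5}{4}\bigl(2\pi k(1-k/N)\bigr)^{-1/2} e^{-N(1-\ln 2)}, \]
where $A_M$ is a generic $n\times m$ i.i.d.\ Gaussian block.

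Next I would substitute Edelman's pdf bound $\pdfmin$ for $\lambda^{min}$ of the appropriately renormalized Wishart matrix $A_M^* A_M$, which, after applying Stirling's inequality \eqref{stirling}, factors as a slowly growing polynomial times $\exp\bigl(n\delta\, \psi_{min}(\lambda,\gamma)\bigr)$. Because $\psi_{min}(\cdot,\gamma)$ is strictly increasing on $[0,1-\gamma]$, a Laplace-type estimate bounds the tail integral $\int_0^{\lambda^*} \pdfmin \, d\lambda$ by its integrand at $\lambda^*$ up to an $O(1)$ multiplier that is absorbed into the prefactor. Multiplying by $u$ and invoking Stirling once more on $r = {N\choose k}{m\choose k}^{-1}$ collects all exponential contributions into $n\delta\,\psi_L(\lambda^*,\gamma)$ as defined in the statement. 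Setting $\lambda^* = \lambda^{min}(\delta_n,\rho_n) - \epsilon$ and Taylor expanding $\psi_L$ around $\lambda^{min}(\delta_n,\rho_n)$, where the defining relation \eqref{lminnew} ensures $\psi_L = 0$ at $\gamma = \gamma_{max}$, produces the exponential factor $\exp\bigl(n\epsilon\cdot \tfrac{d}{d\lambda}\psi_L(\lambda^{min})\bigr)$ appearing in \eqref{ppmin}; the sign is negative on the lower tail, consistent with a decaying probability.

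The main obstacle is the precise tracking of the polynomial prefactor $p'_{min}$. This requires assembling the three $5/4$ constants from Stirling's bound applied to ${N\choose k}$, ${m\choose k}$, and the binomial factor in Edelman's density; the $e\sqrt{\lambda}/(\pi\sqrt{2})$ coming from the leading polynomial of Edelman's smallest-eigenvalue pdf; and the denominator square root $\bigl(nN(\gamma-\rho)/[\gamma\delta(1-\rho\delta)]\bigr)^{1/2}$, which traces back to the $\sqrt{p(1-p)}$ factors in Stirling's expansion for those binomials evaluated at $p = k/N$, $k/m$, and $m/N$. One must also verify that neither the Laplace step nor the Taylor remainder contributes stray factors of $n$; this is guaranteed by the strict monotonicity and convexity of $\psi_{min}$ at $\lambda^{min}(\delta,\rho)$, which makes the remainder uniformly $o(\epsilon)$.
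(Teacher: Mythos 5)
Your proposal follows essentially the same route as the paper's (sketched) proof: covering all $k$-subsets by $m$-supersets via Lemma \ref{covering}, a union bound over the $u=rN$ groups with the group minimum controlled by $\lambda^{min}(A_M^*A_M)$ (the interlacing step the paper leaves implicit), Edelman's bound $g_{min}$ split into polynomial and exponential parts, and a tangent-line estimate at the root of $\psi_L$ furnished by \eqref{lminnew}. You are in fact more explicit than the paper, which simply defers to the proof of Proposition \ref{propU}, and your accounting of the prefactor correctly reproduces $p'_{min}(n,\lambda)=2\lambda\, p_{min}(n,\lambda)\left(\tfrac{5}{4}\right)^3\left(\tfrac{nN(\gamma-\rho)}{\gamma\delta(1-\rho\delta)}\right)^{1/2}$.
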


The proofs of Propositions \ref{propU} and \ref{propL} are presented in Appendix \ref{sec:ProofRIP} and also serve as the proof of Theorem \ref{BTthm} which follows by taking the appropriate limits.  From Propositions \ref{propU} and \ref{propL} we calculated bounds for a few example values of $(k,n,N)$ and $\epsilon$. Table \ref{tab:FiniteN1} shows bounds on $\mathbf{P}\left(U(k,n,N) > \Ubt(\delta,\rho) + \epsilon \right)$ for a few values of $(k,n,N)$ with two different choices of $\epsilon$. It is remarkable that these probabilities are already close to zero for these small values of $(k,n,N)$ and even for $\epsilon\ll 1$.  Table \ref{tab:FiniteN2} shows bounds on $\mathbf{P}\left(L(k,n,N) > \Lbt(\delta,\rho) + \epsilon \right)$ for the same values of $(k,n,N)$ as in Table \ref{tab:FiniteN1}, but with even smaller values for $\epsilon$.  Again, it is remarkable that these probabilities are extremely small, even for relatively small values of $(k,n,N)$ and $\epsilon$.

\begin{table}[h]
\centering
\begin{tabular}{|c|c|c|c|c|}
\hline
$k$ & $n$ & $N$ & $\epsilon$ & $Prob$ \\
\hline
100 & 200 & 2000 & $10^{-3}$ & $2.9 \times 10^{-2}$ \\
200 & 400 & 4000 & $10^{-3}$ & $9.5 \times 10^{-3}$ \\
400 & 800 & 8000 & $10^{-3}$ & $2.9 \times 10^{-3}$ \\
\hline
100 & 200 & 2000 & $10^{-10}$ & $3.2 \times 10^{-2}$ \\
200 & 400 & 4000 & $10^{-10}$ & $1.1 \times 10^{-2}$ \\
400 & 800 & 8000 & $10^{-10}$ & $4.0 \times 10^{-3}$ \\
\hline
\end{tabular}
\caption{$Prob$ is an upper bound of $\mathbf{P}\left(U(k,n,N) > \Ubt(\delta,\rho) + \epsilon \right)$ for the specified $(k,n,N)$ and $\epsilon$.}
\label{tab:FiniteN1}
\end{table}

\begin{table}[h]
\centering
\begin{tabular}{|c|c|c|c|c|}
\hline
$k$ & $n$ & $N$ & $\epsilon$ & $Prob$ \\
\hline
100 & 200 & 2000 & $10^{-5}$ & $2.8 \times 10^{-18}$ \\
200 & 400 & 4000 & $10^{-5}$ & $9.1 \times 10^{-32}$ \\
400 & 800 & 8000 & $10^{-5}$ & $2.8 \times 10^{-58}$ \\
\hline
\end{tabular}
\caption{$Prob$ is an upper bound of $\mathbf{P}\left(L(k,n,N) > \Lbt(\delta,\rho) + \epsilon \right)$ for the specified $(k,n,N)$ and $\epsilon$.}
\label{tab:FiniteN2}
\end{table}


\subsection{Implications for Sparse Approximation and Compressed Sensing}\label{sec:SACS}.

The RIC were introduced by Cand\'es and Tao \cite{CTDecoding} as a technique to prove that in certain conditions the sparsest solution of an underdetermined system of equations $Ax=b$ ($A$ of size $n\times N$ with $n<N$) can be found using linear programming.  The RIC is now a widely used technique in the study of sparse approximation algorithms, allowing the analysis of sparse approximation algorithms without specifying the measurement matrix $A$.  For instance, in \cite{candes_sqrt2} it was proven that if $\max(L(k,n,N;A),U(k,n,N;A))<\sqrt{2}-1$ then if $Ax=b$ has a unique $k$-sparse solution as its sparsest solution, then $\argmin\|z\|_1$ subject to $Az=b$ will be this $k$-sparse solution.  A host of other RIC based conditions have been derived for this and other sparsifying algorithms.  However, the values of $(k,n,N)$ when these conditions on the RIC are satisfied can only be determined once the measurement matrix $A$ has been specified \cite{BCT_spl}.  

The RIC bounds for the Gaussian ensemble discussed here allow one to state values of $(k,n,N)$ when sparse approximation recovery conditions are satisfied; and from these, guarantee the recovery of $k$-sparse vectors from $(A,b)$.  Unfortunately, all existing sparse approximation bound on the RICs are sufficiently small that they are only satisfied for $\rho\ll 1$, typically on the order of $10^{-3}$.  Although the bounds presented here are a strict improvement over the previously best known bounds, and for some $(\delta,\rho)$ achieve as much as a $20\%$ decrease, see Figure \ref{ULBT_ratios}, the improvements for $\rho\ll 1$ are meager, approximately $0.5-1\%$.  This limited improvement for compressed sensing algorithms is in large part due to the previous bounds being within $30\%$ of empirically observed lower bounds on RIC for $n=400$ when $\rho<10^{-2}$, \cite{BCT}.  

In \cite{BCTT}, using RIC bounds from \cite{BCT}, lower bounds on the phase transitions for exact recovery of \mbox{$k-\mathrm{sparse}$} signals for three greedy algorithms and \mbox{$l_{1}-\mathrm{minimisation}$} were presented. These curves are functions $\rho_{S}^{sp}(\delta)$ for Subspace Pursuit (SP) \cite{SubspacePursuit}, $\rho_{S}^{csp}(\delta)$ for Compressive Sampling Matching Pursuit (CoSaMP) \cite{NeTr09_cosamp}, $\rho_{S}^{iht}(\delta)$ for Iterative Hard Thresholding (IHT) \cite{BlDa08_iht} and $\rho_{S}^{l_{1}}(\delta)$ for \mbox{$l_{1}-\mathrm{minimisation}$} \cite{FoucartLai08}.  Figure 1 in \cite{BCTT} shows a plot of these phase transition curves. Figure \ref{phaseT} shows the new phase transition curves based on our new bounds.  The curves in the left Panel are approximately $0.5-1\%$ higher than those presented in \cite{BCT}.

\begin{figure}[ht]
\centering
\includegraphics[width=0.495\textwidth]{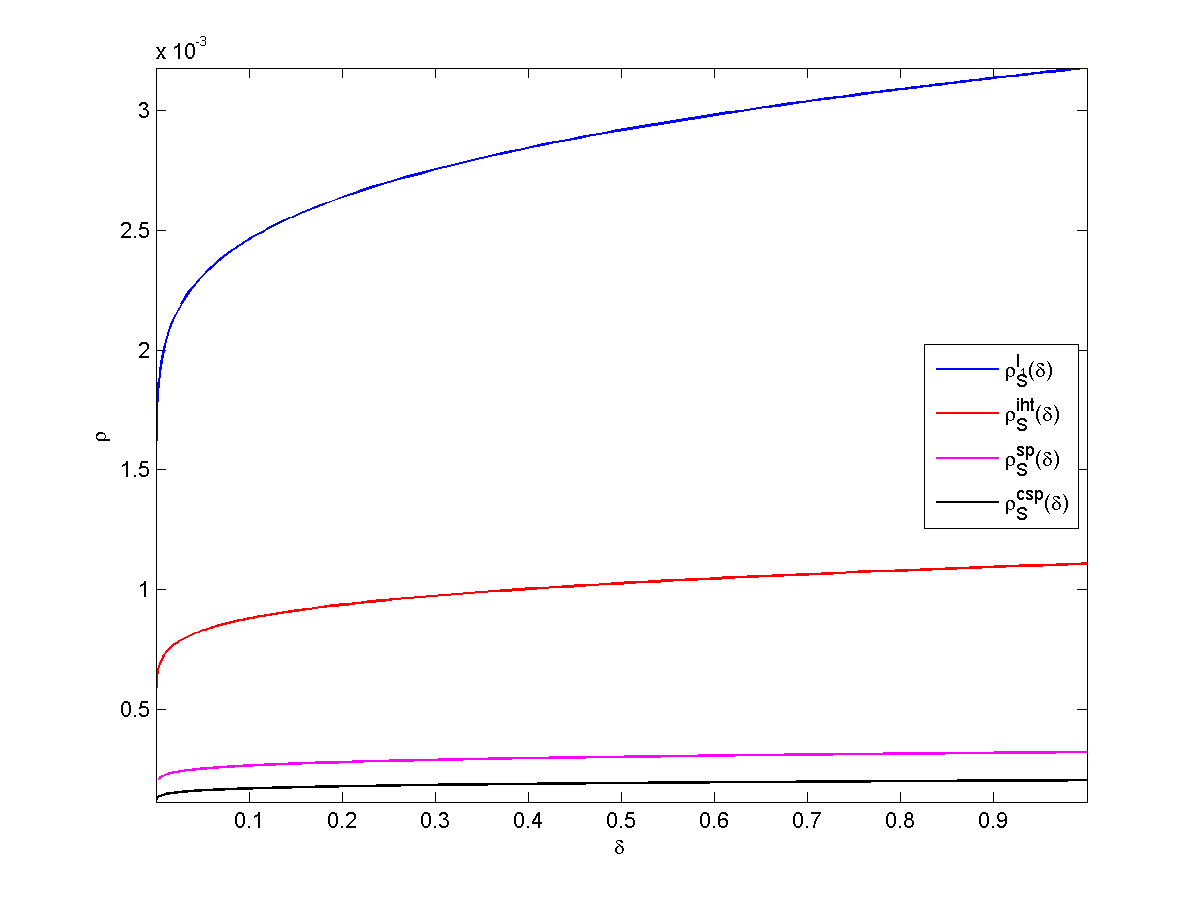}
\includegraphics[width=0.495\textwidth]{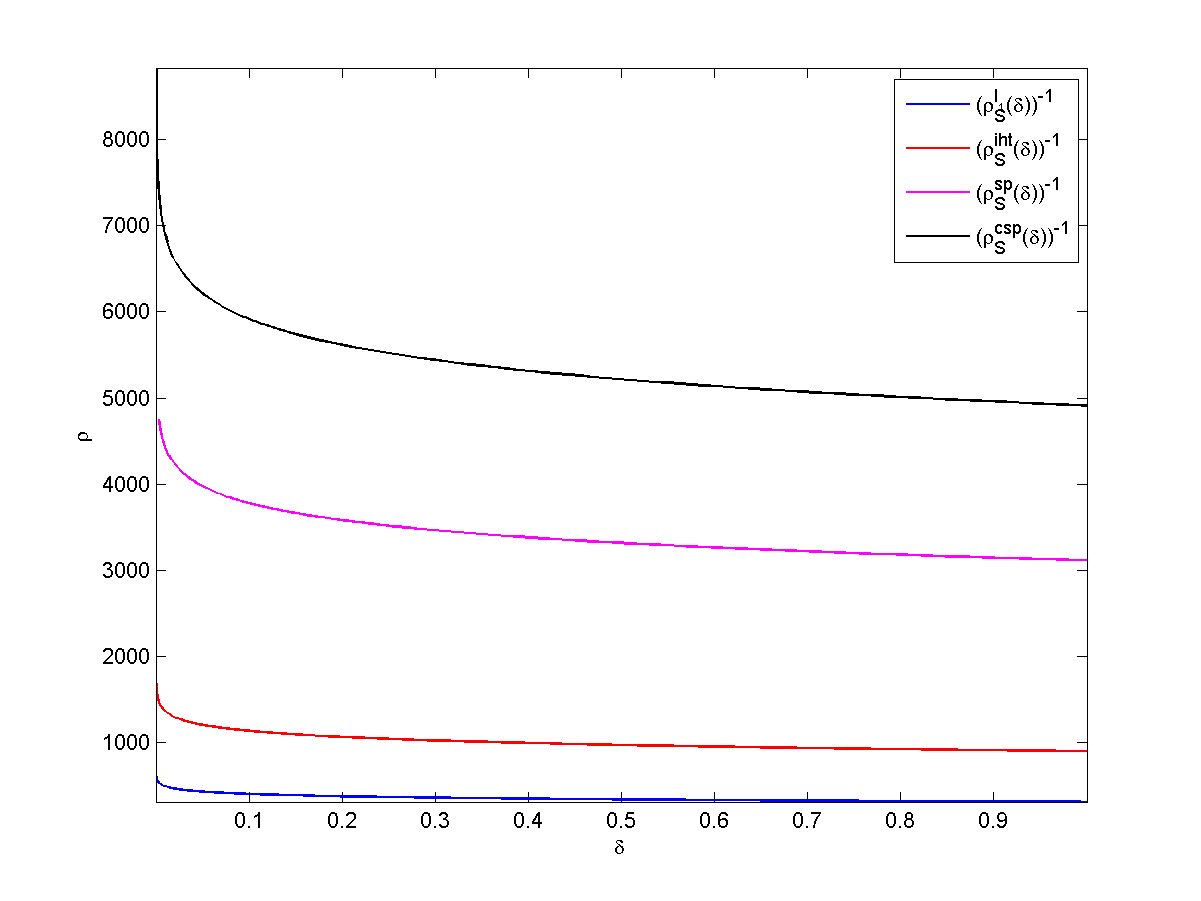}
\caption[Phase Transitions]{\textit{Left panel}: The lower bounds on the exact recovery phase transition for Gaussian random matrices for $l_{1}-\mathrm{regularization},$ IHT, SP and CoSaMP implied by the RIC bounds in Theorem \ref{BTthm}; \textit{Right panel}: The inverse of the phase transition lower bounds.}
\label{phaseT}
\end{figure}


\section{Appendix}\label{sec:appdix} Here we present the proofs of the key theorems and lemmas stated in the paper. For other theorems and lemmas, especially technical lemmas used without stating in our analysis, you are referred to the Appendix of \cite{BCT}.  

\subsection{Proof of Lemma \ref{uniqminmax}}\label{sec:Proofumm} We start by showing that $\lambda^{max}(\delta,\rho;\gamma)$ has a unique minimum for each fixed $\delta, \rho$ and $\gamma \in [\rho, \delta^{-1}]$. Equation \eqref{lmaxnew} gives the implicit relation between $\lambda^{max}$ and $\gamma$ as 
\begin{equation}
\delta \psi_{max}\left(\lambda^{max},\gamma\right) + H(\rho\delta) - \delta\gamma H\left(\rho/\gamma\right) = 0, \quad for ~\lambda^{max} \geq 1+\gamma, \nonumber
\end{equation}
where
\begin{equation}
 \psi_{max}(\lambda^{max}, \gamma) = \frac{1}{2} \big[\left(1+\gamma\right)\ln\left(\lambda^{max} \right) - \gamma \ln \gamma + 1 + \gamma - \lambda^{max} \big].  \nonumber
\end{equation}
Therefore, $\displaystyle\frac{d}{d\gamma} \left(\lambda^{max}\right) = \frac{\lambda^{max}}{\lambda^{max} - (1 + \gamma)}\ln \Bigg[\frac{\lambda^{max}\cdot(\gamma - \rho)^2}{\gamma^3} \Bigg]$ is equal to zero when
\begin{equation}
\label{dlmax}
\lambda^{max}\cdot(\gamma - \rho)^2 = \gamma^3.
\end{equation}
Let $\gamma_{min}$ satisfy (\ref{dlmax}). Since $\lambda^{max} \geq 1 + \gamma > 0$, $\frac{d}{d\gamma} \left(\lambda^{max}\right)$ is negative for $\gamma \in [\rho,\gamma_{min})$, is zero at $\gamma_{min}$ and is positive for $\gamma \in (\gamma_{min},\delta^{-1})$, equation \eqref{lmaxnew} has a unique minima over $\gamma\in [\rho,\delta^{-1}]$, and the $\gamma$ that obtains the minima is strictly greater than $\rho$.

Similarly, we show that $\lambda^{min}(\delta,\rho;\gamma)$ has a unique maximum for each fixed $\delta, \rho$ and $\gamma \in [\rho, \delta^{-1}]$.  Equation \eqref{lminnew} gives the implicit relation between $\lambda^{min}$ and $\gamma$ as 
\begin{equation}
\delta \psi_{min}\left(\lambda^{min},\gamma\right) + H(\rho\delta) - \delta\gamma H\left(\rho/\gamma\right) = 0, 
\quad for ~\lambda^{min} \leq 1-\gamma, \nonumber
\end{equation}
where
\begin{equation}
\psi_{min}\left(\lambda^{min},\gamma\right) := H\left(\gamma\right) + \frac{1}{2} \Big[ \left(1-\gamma \right)\ln\left(\lambda^{min}\right) + \gamma \ln\gamma + 1 - \gamma - \lambda^{min} \Big].  \nonumber
\end{equation}
Therefore, $\displaystyle\frac{d}{d\gamma} \left(\lambda^{min}\right) = \frac{\lambda^{min}}{(1-\gamma) - \lambda^{min}}\ln \Bigg[\frac{\gamma^3\cdot\lambda^{min}}{(1-\gamma)^2\cdot(\gamma - \rho)^2} \Bigg]$ is equal to zero when
\begin{equation}
\label{dlmin}
\gamma^3\cdot\lambda^{min} = (1-\gamma)^2(\gamma - \rho)^2.
\end{equation}
Let $\gamma_{max}$ satisfy (\ref{dlmin}).  Since $0 < \lambda^{min} \leq 1 + \gamma$, $\frac{d}{d\gamma} \left(\lambda^{min}\right)$ is positive for $\gamma \in [\rho,\gamma_{max})$, zero at $\gamma_{max}$ and negative for $\gamma \in (\gamma_{max},\delta^{-1})$, equation \eqref{lminnew} has a unique maxima over $\gamma\in [\rho,\delta^{-1}]$, and the $\gamma$ that obtains the maxima is strictly greater than $\rho$.
\begin{flushright}
    $\blacksquare$
\end{flushright}

\subsection{Proof of main results, Theorem \ref{BTthm} and Propositions \ref{propU} and \ref{propL}}\label{sec:ProofRIP} Here we give a proof similar to that given in \cite{BCT} but we take great care at the non-exponential terms necessary for the calculations of bounds of probabilities for finite values of $(k,n,N)$ in Section \ref{sec:FiniteN}. We present the proof for $\Ubt(\delta,\rho)$ in detail and sketch the proof of $\Lbt(\delta,\rho)$ which follows similarly. 

The following lemma of the bound on the probability distribution function of the maximum eigenvalue of a Wishart matrix due to Edelman, \cite{BCT,BCTT,edelman} is central to our proof.

\begin{lemma} \label{pdfmax} (\cite{edelman}, presented in this form in \cite{BCT})
Let $A_{M}$ be a matrix of size $n\times m$ whose entries are drawn i.i.d. from $\mathcal{N}(0,1/n).$ Let $f_{max}(m,n;\lambda)$ denote the distribution function for the largest eigenvalue of the derived Wishart matrix $A_M^* A_M,$ of size $m\times m$. Then $f_{max}(m,n;\lambda)$ satisfies:
\begin{equation}
\label{fmax}
f_{max}(m,n;\lambda) \leq \bigg[(2\pi)^{\frac{1}{2}} (n\lambda)^{-\frac{3}{2}} \left(\frac{n\lambda}{2}\right)^{\frac{n+m}{2}}\frac{1}{\Gamma\left(\frac{m}{2}\right) \Gamma\left(\frac{n}{2}\right)} \bigg] e^{-\frac{n\lambda}{2}} =:g_{max}(m,n;\lambda).
\end{equation}
\end{lemma}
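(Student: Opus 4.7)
The plan is to derive the bound from the classical joint density of Wishart eigenvalues, following Edelman's approach. Since $A_M$ has i.i.d.\ $\mathcal{N}(0,1/n)$ entries, the matrix $nA_M^*A_M$ is a standard Wishart matrix $W_m(n,I_m)$ whose eigenvalues $\mu_1\ge\cdots\ge\mu_m\ge 0$ have the classical joint density
\[
p(\mu_1,\ldots,\mu_m) = c_{n,m}\prod_{i<j}|\mu_i-\mu_j|\prod_{i=1}^m\mu_i^{(n-m-1)/2}e^{-\mu_i/2},
\]
with $c_{n,m}$ an explicit normalization constant involving the multivariate Gamma functions $\Gamma_m(n/2)$ and $\Gamma_m(m/2)$. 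The rescaling $\lambda_i=\mu_i/n$ contributes a Jacobian $n^m$ and converts this into the joint density of the eigenvalues of $A_M^*A_M$ itself.

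Next, I would express $f_{max}(m,n;\lambda)$ as the integral of the rescaled joint density over the ordered simplex $\{0\le\lambda_m\le\cdots\le\lambda_2\le\lambda_1=\lambda\}$. The central step, which is Edelman's, is to identify this $(m{-}1)$-fold integral with a determinant built from Laguerre polynomials, which appears because the Vandermonde factor admits a representation in the Laguerre orthogonal-polynomial basis with respect to the weight $\mu^{(n-m-1)/2}e^{-\mu/2}$. This determinantal representation then admits a clean pointwise upper bound, collapsing the integral into a single closed-form expression whose $\lambda$-dependence is the factor $\lambda^{(n+m)/2-3/2}e^{-n\lambda/2}$ with an explicit multiplicative constant.

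The constant prefactor is then obtained by collecting all surviving $\Gamma$ factors. The multivariate Gamma normalization $\Gamma_m(n/2)\Gamma_m(m/2)$ decomposes via the product formula $\Gamma_m(a)=\pi^{m(m-1)/4}\prod_{j=0}^{m-1}\Gamma(a-j/2)$, and almost all of its factors cancel against those produced by the Laguerre-polynomial bound, leaving only $\Gamma(n/2)\Gamma(m/2)$ in the denominator of \eqref{fmax}. A single Stirling approximation applied to one extremal factor then supplies the $(2\pi)^{1/2}$ prefactor and the $(n\lambda)^{-3/2}$ polynomial correction.

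The hard part is not conceptual but the combinatorial bookkeeping of powers of $n$, $\lambda$, and $2$, of the Jacobian, and of the Laguerre-polynomial identity, so that the exponents and constants in \eqref{fmax} are matched exactly. For this reason the cleanest route to complete the proof is to invoke Edelman's explicit computation in \cite{edelman}, or the restatement in \cite{BCT}, rather than to redo the computation by hand; the lemma as stated is then immediate.
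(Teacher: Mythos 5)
The paper states this lemma without proof, importing it verbatim from Edelman \cite{edelman} as restated in \cite{BCT}, and your proposal ultimately does the same---your concluding appeal to those references is exactly the paper's own treatment, so the lemma is established to the same standard. Your preliminary sketch (pass to the standard Wishart matrix $nA_M^*A_M$, write the joint eigenvalue density with the Vandermonde and $\mu_i^{(n-m-1)/2}e^{-\mu_i/2}$ factors, integrate out all but the largest eigenvalue) is a fair outline of where the bound originates, but the decisive step---the claimed ``clean pointwise upper bound'' that collapses the $(m-1)$-fold integral into the closed form $g_{max}$---is precisely the content of Edelman's computation and is not established in your sketch (and Edelman's route is the more elementary one of bounding the Vandermonde factor by powers of the largest eigenvalue and recognizing the remaining integral as a normalization constant, rather than a Laguerre determinantal identity); since you invoke the reference anyway, nothing is lost.
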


It is helpful at this stage to rewrite Lemma \ref{pdfmax}, separating the exponential and polynomial parts (with respect to $n$) of $g_{max}(m,n;\lambda)$ as thus:

\begin{lemma} \label{pdfmax2} 
Let $\gamma_n = m/n \rightarrow \in [\rho_n,\delta_n^{-1}]$ and define 
\begin{equation}
\label{psimax}
 \psi_{max}(\lambda, \gamma) := \frac{1}{2} \Big[\left(1+\gamma\right)\ln\lambda - \gamma \ln \gamma + 1 + \gamma - \lambda \Big]. 
\end{equation}
Then
\begin{equation}
\label{fmax2}
f_{max}(m,n;\lambda)\le g_{max}(m,n;\lambda) \leq p_{max}(n,\lambda; \gamma) \exp\Big( n \cdot \psi_{max}(\lambda,\gamma)\Big) 
\end{equation}
where $p_{max}(n,\lambda; \gamma)$ is a polynomial in $n, \lambda$ and $\gamma$, given by
\begin{equation}
\label{pmax}
p_{max}(n,\lambda; \gamma) = \left(\frac{8}{\pi}\right)^{1/2} \gamma^{-1} n^{-7/2} \lambda^{-3/2}.
\end{equation} 
\end{lemma}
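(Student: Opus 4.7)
The first inequality $f_{max}(m,n;\lambda)\leq g_{max}(m,n;\lambda)$ is a direct restatement of Lemma \ref{pdfmax}, so my plan is to concentrate on establishing the second inequality $g_{max}(m,n;\lambda)\leq p_{max}(n,\lambda;\gamma)\exp(n\,\psi_{max}(\lambda,\gamma))$. The natural tool is Stirling's lower bound for the Gamma function, $\Gamma(x)\geq \sqrt{2\pi}\,x^{x-1/2}e^{-x}$, which I apply to both $\Gamma(m/2)$ and $\Gamma(n/2)$ in the definition of $g_{max}$. This yields the upper bound $[\Gamma(m/2)\Gamma(n/2)]^{-1}\le (2\pi)^{-1}(m/2)^{-m/2+1/2}(n/2)^{-n/2+1/2}e^{(m+n)/2}$. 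After this substitution, together with the parameterization $m=\gamma n$, every remaining factor in $g_{max}$ becomes an explicit product of powers of $n$, $\lambda$, $\gamma$, $2$, and $e$.

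The key cancellation occurs between the $(n\lambda/2)^{(n+m)/2}$ in the numerator and the Stirling contributions $(n/2)^{n/2-1/2}(\gamma n/2)^{\gamma n/2-1/2}$ in the denominator. The shared powers of $n/2$ collapse down to a single polynomial factor, while the residual $(n/(\gamma n))^{\gamma n/2}=\gamma^{-\gamma n/2}$ is extracted and combines with the remaining $\lambda^{(1+\gamma)n/2}$, $e^{(1+\gamma)n/2}$ and the original $e^{-n\lambda/2}$ to produce
\[
\exp\!\Big[\tfrac{n}{2}\big((1+\gamma)\ln\lambda -\gamma\ln\gamma + 1 + \gamma - \lambda\big)\Big] = \exp(n\,\psi_{max}(\lambda,\gamma)),
\]
which is exactly the exponential factor prescribed by \eqref{psimax}. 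Thus the dominant exponential-in-$n$ behavior of $g_{max}$ matches $\exp(n\psi_{max})$ on the nose, and the remaining task is to isolate the subexponential prefactor.

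The main obstacle is the bookkeeping of these subexponential factors rather than any conceptual difficulty. One must carefully account for the $(n\lambda)^{-3/2}$ prefactor already present in $g_{max}$, for the additional $(n/2)^{1/2}$ and $(\gamma n/2)^{1/2}$ residues produced by the $-1/2$ in the Stirling exponents, and for the constants $\sqrt{2\pi}$ from the numerator against $(\sqrt{2\pi})^{2}=2\pi$ from the two Stirling denominators. Once these are collected into a single polynomial expression in $n,\lambda,\gamma$ and crudely majorized uniformly on $\gamma\in[\rho,\delta^{-1}]$, the outcome is bounded above by the claimed $p_{max}(n,\lambda;\gamma) = (8/\pi)^{1/2}\gamma^{-1}n^{-7/2}\lambda^{-3/2}$. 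The explicit form of $p_{max}$ is important only because it propagates into the finite-$N$ prefactors $p'_{max}$ appearing in Proposition \ref{propU}; the corresponding statement for the minimum eigenvalue is obtained by applying the analogous bound from Edelman's work, with an entirely parallel argument producing $\psi_{min}$ and $p_{min}$.
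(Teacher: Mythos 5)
Your strategy is exactly the paper's: apply the Binet/Stirling lower bound $\Gamma(z)\ge\sqrt{2\pi}\,z^{z-1/2}e^{-z}$ to both $\Gamma(m/2)$ and $\Gamma(n/2)$, observe that the powers of $n/2$ whose exponents grow linearly in $n$ cancel exactly, and read off the surviving exponential as $\exp(n\,\psi_{max}(\lambda,\gamma))$. Up to that point your computation is correct and coincides with the paper's decomposition $\tfrac1n\ln g_{max}=\Phi_1+\Phi_2+\Phi_3$.

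The gap is your final assertion that the leftover subexponential factors are ``bounded above by the claimed $p_{max}$''. They are not. Carrying out the bookkeeping you describe, the residual from the two Stirling applications is $(m/2)^{1/2}(n/2)^{1/2}/(2\pi)=\sqrt{\gamma}\,n/(4\pi)$, and multiplying by the prefactor $(2\pi)^{1/2}(n\lambda)^{-3/2}$ already present in $g_{max}$ gives
\[
g_{max}(m,n;\lambda)\;\le\;(8\pi)^{-1/2}\,\gamma^{1/2}\,n^{-1/2}\,\lambda^{-3/2}\,e^{n\,\psi_{max}(\lambda,\gamma)},
\]
which exceeds the stated $p_{max}=(8/\pi)^{1/2}\gamma^{-1}n^{-7/2}\lambda^{-3/2}$ by the factor $\gamma^{3/2}n^{3}/8$. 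Since the Binet bound is asymptotically sharp, $g_{max}$ genuinely is of order $n^{-1/2}\lambda^{-3/2}e^{n\psi_{max}}$, so no finer estimate can deliver the $n^{-7/2}$ rate; indeed \eqref{fmax2} with \eqref{pmax} already fails at $m=n=4$, where $g_{max}=2\sqrt{2\pi}\,\lambda^{5/2}e^{-2\lambda}\approx 5.0\,\lambda^{5/2}e^{-2\lambda}$ while the claimed right-hand side is $\approx 0.68\,\lambda^{5/2}e^{-2\lambda}$. To be fair, you have faithfully reproduced the paper's own slip: its intermediate inequality replaces the additive correction $+\,n^{-1}\ln\bigl(\sqrt{\gamma}\,n/(4\pi)\bigr)$ to $\Phi_2+\Phi_3$ by $-\,n^{-1}\ln(\pi\gamma n^2/2)$, i.e.\ it divides by (a version of) the residual polynomial where it should multiply. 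The lemma is true with $p_{max}$ replaced by $(8\pi)^{-1/2}\gamma^{1/2}n^{-1/2}\lambda^{-3/2}$; as this is still polynomial in $n$, the asymptotic statement in Theorem \ref{BTthm} is unaffected, but the finite-$N$ prefactor $p'_{max}$ in Proposition \ref{propU} must be rescaled accordingly.
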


\begin{proof} Let $\gamma_{n} = \frac{m}{n}$ and $\frac{1}{n}\ln[g_{max}(m,n;\lambda)] = \Phi_{1}(m,n;\lambda) + \Phi_{2}(m,n;\lambda) + \Phi_{3}(m,n;\lambda)$ where 
\begin{align}
& \Phi_{1}(m,n;\lambda)  =  \frac{1}{2n}\ln(2\pi) - \frac{3}{2n}\ln(n\lambda), \quad
\Phi_{2}(m,n;\lambda)  =  \frac{1}{2} \Big[\left(1+\gamma\right)\ln(\frac{n\lambda}{2}) - \lambda \Big] \nonumber\\
& \mathrm{and} \quad \Phi_{3}(m,n;\lambda)  =  -\frac{1}{n}\ln\left(\Gamma\left(\frac{m}{2}\right) \Gamma\left(\frac{n}{2}\right)\right).\nonumber
\end{align}

\noindent We simplify $\Phi_{3}(m,n;\lambda)$ by using the second Binet's log gamma formulae \cite{WW} 
\begin{equation}
\label{loggamma}
\ln\left(\Gamma(z)\right) \ge (z - 1/2)\ln z - z + \ln \sqrt{2\pi}.
\end{equation} Thus we have
\begin{equation*}
\Phi_{2}(m,n;\lambda) + \Phi_{3}(m,n;\lambda) \le \frac{1}{2} \Big[\left(1+\gamma_{n}\right)\ln\lambda -\gamma_n\ln\gamma_n+1+\gamma_n-\lambda\Big]
-n^{-1}\ln\left(\pi\gamma_n n^2/2\right).
\end{equation*}

Incorporating $\Phi_{1}(m,n;\lambda)$ and $-n^{-1}\ln\left(\pi\gamma_n n^2/2\right)$ into $p_{max}(n,\lambda;\gamma)$ and defining the exponent (\ref{psimax}) as
\begin{equation}
\psi_{max}(\lambda, \gamma) := \mathop{\lim}_{n\rightarrow \infty}\frac{1}{n}\ln\left(g_{max}(m,n;\lambda)\right) = \frac{1}{2} \Big[\left(1+\gamma\right)\ln\lambda - \gamma\ln\gamma + \gamma + 1 - \lambda \Big] \nonumber
\end{equation}
completes the proof.
\end{proof}

The upper RIC bound $\Ubt(\delta,\rho)$ is obtained by a) construct the groups ${\cal G}_i$ according to Lemma \ref{covering},  taking a union bound over all $u=rN$ groups, and bounding the extreme eigenvalues within a group by the extreme eigenvalues of the Wishart matrices $A_{M_i}^* A_{M_i},$, see \eqref{eq:unionbound}.  In preparation for bounding the right hand side of \eqref{eq:unionbound} we compute a bound on $rNg_{max}(m,n;\lambda)$.

From Lemma \ref{pdfmax2} and equation \eqref{stirling} we have
\begin{equation}\label{asymp1}
2\lambda N{N\choose k} {m\choose k}^{-1}  g_{max}(m,n;\lambda) \le 
p'_{max}(n,\lambda) e^{n\psi_U(\lambda,\gamma)}
\end{equation}
where
\[
\psi_U(\lambda,\gamma):=\delta^{-1}\left[H(\rho\delta) - \delta\gamma H\left(\frac{\rho}{\gamma}\right) + \delta \psi_{max}(\lambda, \gamma) \right]
\]
and 
\begin{equation}
p'_{max}(n,\lambda):= 2 \lambda 
\left(\frac{5}{4}\right)^3\left(\frac{nN(\gamma-\rho)}{\gamma\delta(1-\rho\delta)}\right)^{1/2}
p_{max}(n,\lambda;\gamma).
\end{equation}

The proof of proposition \ref{propU} then follows.
\begin{proof}(Proof of Proposition \ref{propU})

For $\epsilon > 0$ with $\displaystyle \lambda^{max}(\delta,\rho) = \mathop{\min}_{\gamma} \lambda^{max}(\delta,\rho;\gamma)$ being the optimal solution to (\ref{lmaxnew}), 
\begin{eqnarray}
\label{probint}
\mathbf{P}\Big(U(k,n,N) > U(\delta_{n},\rho_{n}) + \epsilon \Big) & = & \mathbf{P}\Big(U(k,n,N) > \lambda^{max}(\delta_{n},\rho_{n}) - 1 + \epsilon \Big) \nonumber\\
& =  & \mathbf{P}\Big(1 + U(k,n,N) > \lambda^{max}(\delta_{n},\rho_{n}) + \epsilon \Big) \nonumber\\
& =  & N {N \choose k}\displaystyle {m \choose k}^{-1}  \int_{\lambda^{max}(\delta_{n},\rho_{n}) + \epsilon}^{\infty}  f_{max}(m,n;\lambda)d\lambda \nonumber\\
& \leq & N {N \choose k}\displaystyle {m \choose k}^{-1}  \int_{\lambda^{max}(\delta_{n},\rho_{n}) + \epsilon}^{\infty}  g_{max}(m,n;\lambda)d\lambda 
\end{eqnarray}
To bound the final integral in \eqref{probint} we write $g_{max}(m,n;\lambda)$ as a product of two separate functions - one of $\lambda$ and another of $n$ and $\gamma_n$, as $g_{max}(m,n;\lambda) = \varphi(n, \gamma_{n})\lambda^{-\frac{3}{2}} \lambda^{\frac{n}{2}(1 + \gamma_{n})e^{-\frac{n}{2}\lambda}}$ where 
$$\varphi(n, \gamma_{n}) = (2\pi)^{\frac{1}{2}} (n)^{-\frac{3}{2}} \left(\frac{n}{2}\right)^{\frac{n}{2}\left(1 + \gamma_{n} \right)}\frac{1}{\Gamma\left(\frac{n}{2}\gamma_{n}\right) \Gamma\left(\frac{n}{2}\right)}. $$

\noindent With this and using the fact that $\lambda^{max}(\delta_{n},\rho_{n}) > 1 + \gamma_{n}$ and that $\lambda^{\frac{n}{2}(1 + \gamma_{n})}e^{-\frac{n}{2}\lambda}$ is strictly decreasing in $\lambda$ on $[\lambda^{max}(\delta_{n},\rho_{n}),\infty)$ we can bound the integral in (\ref{probint}) as follows.

\begin{eqnarray}
\label{intgmax}
\displaystyle \int_{\lambda^{max}(\delta_{n},\rho_{n}) + \epsilon}^{\infty}  g_{max}(m,n;\lambda)d\lambda & \leq & \varphi(n, \gamma_{n}) \displaystyle {[\lambda^{max}(\delta_{n},\rho_{n}) + \epsilon]}^{\frac{n}{2}(1 + \gamma_{n})} e^{-\frac{n}{2}(\lambda^{max}(\delta_{n},\rho_{n}) + \epsilon)} \int_{\lambda^{max}(\delta_{n},\rho_{n}) + \epsilon}^{\infty} \lambda^{-\frac{3}{2}} d\lambda  \nonumber\\
& = & \displaystyle{[\lambda^{max}(\delta_{n},\rho_{n})]}^{\frac{3}{2}} g_{max}\big[m,n;\lambda^{max}(\delta_{n},\rho_{n}) + \epsilon\big] \int_{\lambda^{max}(\delta_{n},\rho_{n}) + \epsilon}^{\infty} \lambda^{-\frac{3}{2}} d\lambda \nonumber\\
& = & 2 \displaystyle{\lambda^{max}(\delta_{n},\rho_{n})}  g_{max}\big[m,n;\lambda^{max}(\delta_{n},\rho_{n}) + \epsilon\big] 
\end{eqnarray}

\noindent Thus (\ref{probint}) and (\ref{intgmax}) together gives
\begin{eqnarray}
\label{probfinal}
\mathbf{P}\Big(U(k,n,N) > U(\delta_{n},\rho_{n}) + \epsilon \Big) & \leq & 2 {\lambda^{max}(\delta_{n},\rho_{n})} r N g_{max}\big[m,n;\lambda^{max}(\delta_{n},\rho_{n}) + \epsilon\big]  \nonumber\\
& \leq & p'_{max}\Big(n,\lambda^{max}(\delta_{n},\rho_{n})\Big) \exp\big[n \cdot \psi_{U}\left(\lambda^{max}(\delta_{n},\rho_{n}) + \epsilon\right)\big]\nonumber\\
& \leq & p'_{max}\Big(n,\lambda^{max}(\delta_{n},\rho_{n})\Big) \exp\bigg[n\epsilon \cdot \frac{d}{d\lambda} \psi_{U}\Big(\lambda^{max}(\delta_{n},\rho_{n})\Big)\bigg],
\end{eqnarray}
where $r = {N \choose k} {m \choose k}^{-1}$, and the last inequality is due to $\psi_{U}(\lambda)$ being strictly concave.
\end{proof}

The following is a corollary to Proposition \ref{propU}:
\begin{corollary}
\label{corlpropU} Let $(\delta,\rho) \in (0,1)^2$ and let $A$ be a matrix of size $n\times N$ whose entries are drawn i.i.d. from $\mathcal{N}(0,1/n).$ Define $\Ubt(\delta,\rho) = \lambda^{max}(\delta,\rho) - 1$ where $\lambda^{max}(\delta,\rho;\gamma)$ is the solution of (\ref{lmaxnew}) for each $\gamma \in [\rho, \delta^{-1}]$ and $\displaystyle \lambda^{max}(\delta,\rho) := \mathop{\min}_{\gamma} \lambda^{max}(\delta,\rho;\gamma)$. Then for any $\epsilon > 0,$ in the proportional-growth asymptotics
$$\mathbf{P}\left(U(k,n,N) > \Ubt(\delta,\rho) + \epsilon \right) \rightarrow 0 $$ exponentially in $n$.
\end{corollary}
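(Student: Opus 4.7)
The strategy is to obtain Corollary \ref{corlpropU} directly as the proportional-growth limit of the finite-$n$ estimate in Proposition \ref{propU}. First I would note that $\lambda^{max}(\delta,\rho;\gamma)$ depends continuously on its arguments (the equation \eqref{lmaxnew} has a nonvanishing $\lambda$-derivative at the solution, a fact already used in the proof of Lemma \ref{uniqminmax}), and that the minimum over $\gamma\in[\rho,\delta^{-1}]$ is attained at the unique $\gamma_{min}$ provided by Lemma \ref{uniqminmax}; hence $U(\delta_n,\rho_n)\to\Ubt(\delta,\rho)$ as $(k,n,N)\to\infty$ in proportional growth. For any fixed $\epsilon>0$ and all sufficiently large $n$ we therefore have
\[
\mathbf{P}\bigl(U(k,n,N) > \Ubt(\delta,\rho) + \epsilon\bigr) \;\le\; \mathbf{P}\bigl(U(k,n,N) > U(\delta_n,\rho_n) + \epsilon/2\bigr),
\]
and Proposition \ref{propU}, applied with $\epsilon/2$, bounds the right hand side by a sum of two terms.

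Next I would show each of those terms decays exponentially in $n$. The covering term $\frac{5}{4}(2\pi k(1-k/N))^{-1/2}\exp(-N(1-\ln 2))$ is immediate, because $N=n/\delta_n\to n/\delta$ with $\delta\in(0,1)$, giving decay of order $\exp(-n(1-\ln 2)/\delta)$ up to a $\Theta(n^{-1/2})$ polynomial factor. For the first term I would observe that the prefactor $p'_{max}(n,\lambda^{max})$ is manifestly sub-exponential (its explicit form is polynomial in $n$ and $N$, and $N=\Theta(n)$), so it suffices to show that the exponent is strictly negative. Differentiating under the definition of $\psi_U$ (the terms $H(\rho\delta)-\delta\gamma H(\rho/\gamma)$ are $\lambda$-independent) gives
\[
\frac{d}{d\lambda}\psi_U(\lambda,\gamma) \;=\; \frac{d}{d\lambda}\psi_{max}(\lambda,\gamma) \;=\; \frac{1}{2}\left(\frac{1+\gamma}{\lambda} - 1\right),
\]
so the exponent is strictly negative precisely when $\lambda^{max}(\delta,\rho;\gamma_{min}) > 1+\gamma_{min}$ strictly.

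The main obstacle is upgrading the nonstrict inequality $\lambda^{max}\ge 1+\gamma$ built into Definition \ref{BTdfn} to a strict one. Since $\psi_{max}(\cdot,\gamma)$ is strictly concave in $\lambda$ with unique maximum at $\lambda=1+\gamma$ (where the derivative above vanishes and the second derivative $-(1+\gamma)/(2\lambda^2)$ is strictly negative), the defining relation \eqref{lmaxnew} locates $\lambda^{max}$ on the strictly decreasing branch, and the strict inequality follows provided the constant $\delta^{-1}[\delta\gamma H(\rho/\gamma)-H(\rho\delta)]$ that $\psi_{max}(\lambda^{max},\gamma)$ is forced to equal lies strictly below the peak value $\psi_{max}(1+\gamma,\gamma) = \frac{1}{2}[(1+\gamma)\ln(1+\gamma)-\gamma\ln\gamma]$. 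I would verify this by direct inspection for $\gamma\in[\rho,\delta^{-1}]$, or equivalently by appealing to the existence/uniqueness arguments for $\lambda^{max}(\delta,\rho;\gamma)$ established in \cite{BCT}, which already rule out the boundary case.

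Once the strict negativity of $\frac{d}{d\lambda}\psi_U$ at $\lambda^{max}$ is in hand, the first term in Proposition \ref{propU} decays like $\mathrm{poly}(n)\cdot\exp(-cn)$ for some $c>0$ depending on $(\delta,\rho,\epsilon)$, while the covering term decays like $\exp(-n(1-\ln 2)/\delta)$ modulo a polynomial; their sum therefore tends to $0$ exponentially in $n$, establishing Corollary \ref{corlpropU}. The proof of the lower analogue would proceed identically, replacing Proposition \ref{propU} by Proposition \ref{propL} and using the companion identity $\frac{d}{d\lambda}\psi_L(\lambda,\gamma)=\frac{1}{2}(\frac{1-\gamma}{\lambda}-1)$, strictly positive for $\lambda<1-\gamma$.
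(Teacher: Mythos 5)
Your proposal is correct and follows essentially the same route as the paper: the paper's own proof of Corollary \ref{corlpropU} simply invokes the finite-$n$ bound \eqref{probfinal} from Proposition \ref{propU} and notes that $\frac{d}{d\lambda}\psi_U(\lambda^{max}(\delta_n,\rho_n))<0$ is bounded away from zero while the polynomial prefactor and covering term are harmless. The only difference is that you spell out the details the paper leaves implicit --- in particular the verification that $\lambda^{max}(\delta,\rho;\gamma)>1+\gamma$ strictly, which follows as you say because $\psi_{max}(1+\gamma,\gamma)>0$ while the defining equation forces $\psi_{max}(\lambda^{max},\gamma)\le 0$ --- and these details are all correct.
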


\begin{proof}
From (\ref{probfinal}), since $\frac{d}{d\lambda}\psi_{U} \left(\lambda^{max}(\delta_{n},\rho_{n})\right) < 0$ is strictly bounded away from zero and all the limits of $(\delta_{n},\rho_{n})$ are smoothly varying functions we conclude, for any $\epsilon > 0$ 
$$\lim_{n \rightarrow \infty} \mathbf{P}\left(U(k,n,N) > \Ubt(\delta,\rho) + \epsilon \right) \rightarrow 0.$$ 
\end{proof}

\noindent Thus we finish the proof for $\Ubt(\delta,\rho)$. We sketch the similar proof for Proposition \ref{propL} and $\Lbt(\delta,\rho)$.  Bounds on the probability distribution function of the minimum eigenvalue of a Wishart matrix are given in the following lemma.

\begin{lemma} \label{pdfmin} (\cite{edelman}, presented in this form in \cite{BCT})
Let $A_{M}$ be a matrix of size $n\times m$ whose entries are drawn i.i.d. from $\mathcal{N}(0,1/n).$ Let $f_{min}(m,n;\lambda)$ denote the distribution function for the smallest eigenvalue of the derived Wishart matrix $A_M^* A_M,$ of size $m\times m$. Then $f_{min}(m,n;\lambda)$ satisfies:

\begin{equation}
\label{fmin}
f_{min}(m,n;\lambda) \leq (\frac{\pi}{2n\lambda})^{\frac{1}{2}}  \left(\frac{n\lambda}{2}\right)^{\frac{n-m}{2}} \bigg[\frac{\Gamma\left(\frac{n+1}{2}\right)}{\Gamma\left(\frac{m}{2}\right) \Gamma\left(\frac{n-m+1}{2}\right) \Gamma\left(\frac{n-m+2}{2}\right)} \bigg] e^{-\frac{n\lambda}{2}}  =: g_{min}(m,n;\lambda)
\end{equation}
\end{lemma}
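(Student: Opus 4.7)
The plan is to derive the stated upper bound on $f_{min}(m,n;\lambda)$ from Edelman's exact joint eigenvalue density for the Wishart matrix $A_M^\ast A_M$, mirroring the structure of the proof of Lemma \ref{pdfmax}. First I would recall Edelman's closed-form expression for the marginal density of $\lmin(A_M^\ast A_M)$, which he derived by integrating the joint Wishart eigenvalue density over the region where the smallest eigenvalue is fixed. For $A_M$ of size $n\times m$ with i.i.d.\ $\mathcal N(0,1/n)$ entries, this marginal has the schematic form
\begin{equation*}
f_{min}(m,n;\lambda) = C(m,n)\,\lambda^{(n-m-1)/2}\, e^{-n\lambda/2}\,\Phi(m,n;\lambda),
\end{equation*}
where $C(m,n)$ is an explicit ratio of Gamma functions and $\Phi(m,n;\lambda)$ is a confluent hypergeometric-type factor that is uniformly bounded (in fact bounded by a small multiple of $1$ on the range of interest). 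The bulk of the argument is to identify the three Gamma functions in $C(m,n)$ with those appearing in (\ref{fmin}) and then verify $\Phi \le 1$ so as to produce a clean upper bound.

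Second, I would absorb the factor $\lambda^{(n-m-1)/2}$ into the displayed form by pulling out $(n\lambda/2)^{(n-m)/2}$ and a compensating $(2/n\lambda)^{1/2}$, which is exactly the prefactor $(\pi/(2n\lambda))^{1/2}$ up to a constant that is produced in the Gaussian normalisation of the Wishart construction. At this stage the inequality (\ref{fmin}) is in hand as an upper bound on the exact density.

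Third, in preparation for use in the RIC bound (as in Lemma \ref{pdfmax2}), I would separate exponential and polynomial parts in $n$. Taking $\gamma_n = m/n$ and $\tfrac{1}{n}\ln g_{min}$, the terms $(n\lambda/2)^{(n-m)/2}$ and $e^{-n\lambda/2}$ contribute an exponent of order $n$, while each $\ln\Gamma$ is handled via the second Binet log-gamma inequality $\ln\Gamma(z)\ge(z-1/2)\ln z - z + \tfrac{1}{2}\ln(2\pi)$. Collecting all exponential contributions gives precisely
\begin{equation*}
\psi_{min}(\lambda,\gamma) = H(\gamma) + \tfrac{1}{2}\bigl[(1-\gamma)\ln\lambda + \gamma\ln\gamma + 1 - \gamma - \lambda\bigr],
\end{equation*}
with the $H(\gamma)$ term arising from the $\Gamma((n+1)/2)/[\Gamma(m/2)\Gamma((n-m+1)/2)]$ combination after Stirling, and all remaining factors (including $(2\pi)^{1/2}$, $(n\lambda)^{-1/2}$ and the $\Gamma((n-m+2)/2)^{-1}$ residual) absorbed into a polynomial prefactor $p_{min}(n,\lambda;\gamma)$.

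The main obstacle I anticipate is the correct direction of the Stirling bound for each of the three Gamma functions: two appear in the denominator of (\ref{fmin}), so I need a lower bound on each of them, while the one in the numerator requires an upper bound. The second Binet formula gives a lower bound on $\ln\Gamma$, which is the correct direction for the denominator, but for the numerator I would use the matching upper bound $\ln\Gamma(z)\le (z-1/2)\ln z - z + \tfrac{1}{2}\ln(2\pi) + 1/(12z)$, and verify that the $1/(12z)$ correction is subexponential and can be safely absorbed into the polynomial prefactor. Once those three Stirling estimates are aligned consistently, collecting terms according to powers of $n$ and $\gamma$ yields exactly the bound claimed, and the corresponding polynomial factor is the analogue of $p_{max}(n,\lambda;\gamma)$ used later in the proof of Proposition \ref{propL}.
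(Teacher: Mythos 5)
The paper does not actually prove this lemma: it is imported verbatim from Edelman via \cite{BCT}, and the only in-paper content is the citation. Your plan to re-derive it from Edelman's exact marginal density is reasonable in principle, but as written it has a genuine gap exactly where the substance lies. You write the exact density schematically as $C(m,n)\,\lambda^{(n-m-1)/2}e^{-n\lambda/2}\Phi(m,n;\lambda)$ and then say the bulk of the work is to ``identify the three Gamma functions'' and ``verify $\Phi\le 1$'' --- but you never do either. That identification is not cosmetic: Edelman's density for the smallest eigenvalue of a real Wishart matrix carries a Tricomi/Kummer confluent hypergeometric factor, and bounding it uniformly in $\lambda$ and $(m,n)$ is the entire analytic content of \eqref{fmin}. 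Moreover, your fallback that $\Phi$ is bounded by ``a small multiple of $1$'' is not good enough: the right-hand side of \eqref{fmin} has no free multiplicative constant, so you need the bound with exactly the constant $(\pi/(2n\lambda))^{1/2}\,\Gamma(\frac{n+1}{2})/\big[\Gamma(\frac{m}{2})\Gamma(\frac{n-m+1}{2})\Gamma(\frac{n-m+2}{2})\big]$, not that constant times an unspecified factor. The same looseness appears when you wave at the $\sqrt{\pi}$ in the prefactor as being ``produced in the Gaussian normalisation.''

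Separately, your third and fourth paragraphs prove a different statement: the split of $g_{min}$ into $p_{min}(n,\lambda;\gamma)\exp\big(n\,\psi_{min}(\lambda,\gamma)\big)$ via the Binet/Stirling log-gamma bounds is the content of Lemma \ref{pdfmin2}, not of the lemma you were asked to prove. That part of your outline is sound and mirrors how the paper handles Lemma \ref{pdfmax2} (your caveat about needing the log-gamma bound in opposite directions for numerator and denominator Gamma functions is correctly placed), but it presupposes \eqref{fmin} rather than establishing it. To make the proof of \eqref{fmin} itself complete you must either carry out Edelman's hypergeometric estimate explicitly or simply cite it, as the paper does.
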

Again an explicit expression of $f_{min}(m,n;\lambda)$ in terms of exponential and polynomial parts leads to the following Lemma.

\begin{lemma} \label{pdfmin2} 
Let $\gamma_n = m/n$ and define
\begin{equation}
\label{psimin}
\psi_{min}(\lambda, \gamma) := H\left(\gamma\right) + \frac{1}{2} \Big[\left(1-\gamma\right)\ln\lambda + \gamma \ln\gamma + 1 - \gamma - \lambda   \Big].
\end{equation}
Then
\begin{equation}
\label{fmin2}
f_{min}(m,n;\lambda) \le g_{min}(m,n;\lambda) \leq p_{min}(n,\lambda) \exp \Big(n \cdot \psi_{max}(\lambda,\rho,\delta;\gamma) \Big) 
\end{equation}
where $p_{min}(n,\lambda; \gamma)$ is a polynomial in $n, \lambda$ and $\gamma$, given by
\begin{equation}
\label{pmin}
p_{min}(n,\lambda; \gamma) = \frac{e}{2\pi\sqrt{2\lambda}}.
\end{equation} 
\end{lemma}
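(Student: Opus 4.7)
The plan is to proceed exactly in parallel with the proof of Lemma \ref{pdfmax2} given immediately above in the excerpt, the only new wrinkle being that $g_{min}$ contains a ratio of four Gamma functions ($\Gamma((n+1)/2)$ in the numerator and $\Gamma(m/2)$, $\Gamma((n-m+1)/2)$, $\Gamma((n-m+2)/2)$ in the denominator) rather than the product of two Gammas appearing in $g_{max}$. I would start from the explicit bound in Lemma \ref{pdfmin}, take $n^{-1}\ln g_{min}$, and split it as $\Phi_1+\Phi_2+\Phi_3$, where $\Phi_1$ collects the prefactor $\sqrt{\pi/(2n\lambda)}$, $\Phi_2$ collects $\tfrac{n-m}{2}\ln(n\lambda/2)-n\lambda/2$ divided by $n$, and $\Phi_3$ collects the four logarithms of Gamma functions. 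As in the $g_{max}$ argument, $\Phi_1$ is of order $n^{-1}\ln n$ and will be absorbed into $p_{min}$.

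Using the Binet log-Gamma bound $\ln\Gamma(z)\ge (z-1/2)\ln z-z+\ln\sqrt{2\pi}$ as a lower bound on each of the three denominator Gammas (yielding an upper bound on $g_{min}$) and the matching upper bound $\ln\Gamma(z)\le(z-1/2)\ln z-z+\ln\sqrt{2\pi}+1/(12z)$ on the numerator, and writing $\gamma_n=m/n$, the leading-order $\ln(n/2)$ contributions coming out of $\Phi_3$ cancel the $\tfrac{1-\gamma_n}{2}\ln(n/2)$ piece hiding inside $\Phi_2$. What remains at exponential scale is
\[
\tfrac{1-\gamma_n}{2}\ln\lambda-\tfrac{\gamma_n}{2}\ln\gamma_n-(1-\gamma_n)\ln(1-\gamma_n)+\tfrac{1-\gamma_n}{2}-\tfrac{\lambda}{2},
\]
which after rewriting $-\tfrac{\gamma_n}{2}\ln\gamma_n-(1-\gamma_n)\ln(1-\gamma_n)$ as $H(\gamma_n)+\tfrac{\gamma_n}{2}\ln\gamma_n$ becomes exactly $\psi_{min}(\lambda,\gamma_n)$ from \eqref{psimin}. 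This establishes the exponential rate.

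The principal obstacle will be obtaining the clean closed form $p_{min}(n,\lambda;\gamma)=e/(2\pi\sqrt{2\lambda})$ for the polynomial prefactor. This requires carefully tracking the four $\sqrt{2\pi}$ factors from Binet, the initial $\sqrt{\pi/(2n\lambda)}$, the $e^{1/(12z)}$ correction on the numerator Gamma (which, for $z=(n+1)/2$, is bounded by $e$ uniformly in $n$ and in fact tends to $1$), and the fact that $\Gamma((n-m+1)/2)$ and $\Gamma((n-m+2)/2)$ have arguments separated by $1/2$ so that their Stirling approximations do not perfectly combine. I would handle this last asymmetry either by a direct two-term Binet expansion at both half-integer arguments, or via the Legendre duplication formula $\Gamma(z)\Gamma(z+1/2)=2^{1-2z}\sqrt{\pi}\,\Gamma(2z)$ applied at $z=(n-m+1)/2$ to reduce the pair to a single $\Gamma(n-m+1)$. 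After consolidating all constants, the surviving $n$-powers, and the leftover $\lambda^{-1/2}$ factor from $\Phi_1$ that is not absorbed into $\psi_{min}$, the target prefactor emerges, completing the bound.
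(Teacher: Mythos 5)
Your proposal is essentially the paper's own approach: the paper omits this proof entirely, stating only that it ``follows that of Lemma \ref{pdfmax2},'' and your reconstruction is exactly that argument --- the $\Phi_1+\Phi_2+\Phi_3$ split, Binet's bound applied to lower-bound the denominator Gammas and upper-bound the numerator Gamma, cancellation of the $\ln(n/2)$ terms, and absorption of the subexponential remainder into $p_{min}$. Your algebra for the exponential rate checks out (the leftover expression does reduce to $\psi_{min}(\lambda,\gamma_n)$ via the identity $-\tfrac{\gamma}{2}\ln\gamma-(1-\gamma)\ln(1-\gamma)=H(\gamma)+\tfrac{\gamma}{2}\ln\gamma$), and your handling of the half-integer-offset pair of Gammas is a sound way to recover the stated prefactor.
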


The proof of Lemma \ref{pdfmin2} follows that of Lemma \ref{pdfmax2} and is omitted for brevity.  Equipped with Lemma \ref{pdfmin2} a large deviation analysis yields
\begin{equation}
\label{asymp2}
2\lambda N{N\choose k} {m\choose k}^{-1}  g_{max}(m,n;\lambda) \le p'_{min}(n,\lambda) e^{n\psi_L(\lambda,\gamma)}
\end{equation}
where
\[ \psi_L(\lambda,\gamma):=\delta^{-1}\left[H(\rho\delta) - \delta\gamma H\left(\frac{\rho}{\gamma}\right) + \delta \psi_{min}(\lambda, \gamma) \right], \]
and
\begin{equation}
p'_{min}(n,\lambda):=2\lambda\left(\frac{5}{4}\right)^3  
\left(\frac{nN(\gamma-\rho)}{\gamma\delta(1-\rho\delta)}\right)^{1/2} p_{min}(n,\lambda).
\end{equation}

With Lemma \ref{pdfmin2} and (\ref{asymp2}), Proposition \ref{propL} follows similarly to the proof of Proposition \ref{propU} stated earlier in this section.  The bound $\Lbt(\delta,\rho)$ is a corollary of Proposition \ref{propL}.
\begin{corollary}
\label{corlpropL} Let $(\delta,\rho) \in (0,1)^2$ and let $A$ be a matrix of size $n\times N$ whose entries are drawn i.i.d. from $\mathcal{N}(0,1/n).$ Define $\Lbt(\delta,\rho) := 1 - \lambda_{min}(\delta,\rho)$ where $\lambda_{min}(\delta,\rho;\gamma)$ is the solution of (\ref{lminnew}) for each $\gamma \in [\rho, \delta^{-1}]$ and $\displaystyle \displaystyle \lambda^{min}(\delta,\rho) := \mathop{\min}_{\gamma} \lambda^{min}(\delta,\rho;\gamma)$. Then for any $\epsilon > 0,$ in the proportional-growth asymptotic
$$\mathbf{P}\left(L(k,n,N) > \Lbt(\delta,\rho) + \epsilon \right) \rightarrow 0$$ exponentially in $n$.
\end{corollary}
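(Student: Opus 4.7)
The plan is to deduce Corollary \ref{corlpropL} from Proposition \ref{propL} by the same template used to pass from Proposition \ref{propU} to Corollary \ref{corlpropU}. First I would apply Proposition \ref{propL} at the finite-$n$ triple $(k,n,N)$, which bounds $\mathbf{P}(L(k,n,N) > \Lbt(\delta_n,\rho_n) + \epsilon)$ by the sum of two terms: the principal term
\[
p'_{min}\bigl(n,\lambda^{min}(\delta_n,\rho_n)\bigr)\exp\!\left(n\epsilon\cdot\tfrac{d}{d\lambda}\psi_L\bigl(\lambda^{min}(\delta_n,\rho_n)\bigr)\right),
\]
and the covering residual $\tfrac{5}{4}(2\pi k(1-k/N))^{-1/2}\exp(-N(1-\ln 2))$ inherited from Lemma \ref{covering}.

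Next I would dispatch the two terms separately. The covering residual decays like $\exp(-N(1-\ln 2))$, and since $N=n/\delta_n\to n/\delta$ under proportional-growth asymptotics the residual is exponentially small in $n$ (the $k(1-k/N)$ factor is polynomial and absorbed). For the principal term, a direct inspection of the formula for $p'_{min}(n,\lambda)$ shows it grows only polynomially in $n$ (essentially like $n^{1/2}$ after substituting $N=n/\delta_n$), so exponential decay will follow as soon as the exponent $n\epsilon\cdot \tfrac{d}{d\lambda}\psi_L(\lambda^{min}(\delta_n,\rho_n))$ is linear in $n$ with coefficient strictly bounded away from zero with the sign that produces decay.

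That sign and non-degeneracy is where I expect the real content to sit. The function $\psi_L(\cdot,\gamma)$ (after optimising $\gamma$ at $\gamma_{max}$ via Lemma \ref{uniqminmax}) is strictly concave, and $\lambda^{min}(\delta,\rho)$ is defined to be the solution on the branch $\lambda\le 1-\gamma_{max}$, i.e.\ strictly to the left of the maximum of $\psi_L$. Hence $\tfrac{d}{d\lambda}\psi_L$ at $\lambda^{min}(\delta,\rho)$ is strictly signed (in the direction that renders the exponent of Proposition \ref{propL} negative) and does not vanish. By the smooth dependence of $\lambda^{min}(\delta,\rho;\gamma)$ on $(\delta,\rho)$, together with the smoothness of the optimiser $\gamma_{max}$ (Lemma \ref{uniqminmax}), this strict signing persists uniformly along the sequence $(\delta_n,\rho_n)\to(\delta,\rho)$, so $\tfrac{d}{d\lambda}\psi_L(\lambda^{min}(\delta_n,\rho_n))$ is bounded away from zero for all large $n$.

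Assembling these pieces gives $\mathbf{P}(L(k,n,N)>\Lbt(\delta,\rho)+\epsilon)\le (\text{polynomial in }n)\cdot e^{-cn\epsilon}+e^{-c'n}$ for positive constants $c,c'$ depending only on $(\delta,\rho)$, which tends to $0$ exponentially in $n$, yielding the corollary. The main technical obstacle I anticipate is precisely the verification that $\lambda^{min}(\delta,\rho)$ is a simple zero of $\psi_L$ lying strictly to the left of its maximum, so that the first-order Taylor step used in Proposition \ref{propL} produces a genuinely decaying exponential rather than a merely vanishing one; for this I would invoke the strict concavity of $\psi_L$ established in \cite{BCT} and a direct computation of $\tfrac{d}{d\lambda}\psi_L$ at the endpoint $\lambda=1-\gamma_{max}$ from the explicit formula \eqref{psiminnew}, rather than re-deriving it here.
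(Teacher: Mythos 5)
Your proposal is correct and follows essentially the same route as the paper, which deduces Corollary \ref{corlpropL} from Proposition \ref{propL} exactly as Corollary \ref{corlpropU} is deduced from Proposition \ref{propU}: the polynomial prefactor and the covering residual are absorbed, and the conclusion rests on $\frac{d}{d\lambda}\psi_{L}$ at $\lambda^{min}(\delta_n,\rho_n)$ being strictly bounded away from zero with the sign that forces decay, by smoothness of the limits. Your closing remark correctly identifies the only point needing care --- that $\lambda^{min}$ sits strictly on the increasing branch $\lambda<1-\gamma$ of the concave exponent, so the first-order Taylor step (taken at $\lambda^{min}-\epsilon$ for the lower tail) yields a genuinely negative exponent --- which the paper treats implicitly by analogy with the upper-RIC case.
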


\begin{flushright} $\blacksquare$\end{flushright}

\bibliographystyle{plain}
\bibliography{improved_rip_20100314}

\end{document}